         \let\leq=\leqslant
         \let\geq=\geqslant
\providecommand\boldsymbol[1]{\mbox{\boldmath $##1$}}}
\providecommand\bnabla{\boldsymbol{\nabla}}
\newsavebox{\astrutbox}
\sbox{\astrutbox}{\rule[-5pt]{0pt}{20pt}}
\newcommand{\beq}{\begin{equation}}
\newcommand{\eeq}{\end{equation}}
\newcommand{\bfb}{\mbox{\boldmath $b$}}
\newcommand{\bfh}{\mbox{\boldmath $h$}}
\newcommand{\bfk}{\mbox{\boldmath $k$}}
\newcommand{\bfu}{\mbox{\boldmath $u$}}
\newcommand{\bfv}{\mbox{\boldmath $v$}}
\newcommand{\bfx}{\mbox{\boldmath $x$}}
\newcommand{\bfr}{\mbox{\boldmath $r$}}
\newcommand{\bfB}{\mbox{\boldmath $B$}}
\newcommand{\bfH}{\mbox{\boldmath $H$}}
\newcommand{\bfK}{\mbox{\boldmath $K$}}
\newcommand{\bfR}{\mbox{\boldmath $R$}}
\newcommand{\bfX}{\mbox{\boldmath $X$}}
\newcommand{\bfxi}{\mbox{\boldmath $\xi$}}
\newcommand{\ex}{\mbox{{\boldmath $e$}}_{1}}
\newcommand{\ey}{\mbox{{\boldmath $e$}}_{2}}
\newcommand{\ez}{\mbox{{\boldmath $e$}}_{3}}
\newcommand{\bfemf}{\mbox{\boldmath ${\cal E}$}}
\newcommand{\cross}{\mbox{\boldmath $\times$}}
\newcommand{\cendot}{\mbox{\boldmath $\cdot\,$}}
\newcommand{\rem}{{\rm Rm}}
\title[Shear dynamo problem]{The shear dynamo problem for small magnetic
Reynolds numbers}
\author[S. Sridhar and Nishant K. Singh]%
{S.\ns S\ls R\ls I\ls D\ls H\ls A\ls R$^1$
\thanks{Electronic address: ssridhar@rri.res.in}
\and N\ls I\ls S\ls H\ls A\ls N\ls T\ns
K.\ns S\ls I\ls N\ls G\ls H$^{1,2}$
\thanks{Electronic address: nishant@rri.res.in}
}
\affiliation{$^1$Raman Research Institute, Sadashivanagar, Bangalore
 560 080, India\\[\affilskip]
$^2$Joint Astronomy Programme, Indian Institute of Science, Bangalore 560 012, India}
\date{}
\begin{document}

\label{firstpage}
\maketitle

\begin{abstract}
We study large--scale kinematic dynamo action due to turbulence in the presence of a linear 
shear flow, in the low conductivity limit. Our treatment is non perturbative in the 
shear strength and makes systematic use of both the shearing coordinate transformation 
and the Galilean invariance of the linear shear flow. The velocity fluctuations are assumed 
to have low magnetic Reynolds number ($\rem$) but could have arbitrary fluid Reynolds 
number. The equation for the magnetic fluctuations is expanded perturbatively in the small quantity, 
$\rem$. Our principal results are as follows: (i) The magnetic fluctuations are determined to
lowest order in $\rem$ by explicit calculation of the resistive Green's function for the linear 
shear flow; (ii) The mean electromotive force is then calculated and an integro--differential 
equation is derived for the time evolution of the mean magnetic field. In this equation, velocity 
fluctuations contribute to two different kinds of terms, the ``C'' and ``D'' terms, in which first and second 
spatial derivatives of the mean magnetic field, respectively, appear inside the spacetime 
integrals; (iii) The contribution of the ``D'' terms is such that their contribution to the time evolution of the cross--shear 
components of the mean field do not depend on any other components excepting themselves.
Therefore, to lowest order in $\rem$ but to all orders in the shear strength, the ``D'' terms cannot
give rise to  a shear--current assisted dynamo effect;  (iv) Casting the integro--differential equation in Fourier space, 
we show that the normal modes of the theory are a set of shearing waves, labelled by their sheared wavevectors; 
(v) The integral kernels are expressed in terms of the velocity spectrum tensor, which is the fundamental 
dynamical quantity that needs to be specified to complete the integro--differential equation description of the time 
evolution of the mean magnetic field; (vi) The ``C'' terms couple different components of the mean magnetic field, so they 
can, in principle, give rise to a shear--current type effect. We discuss the application to a slowly varying magnetic field, 
where it can be shown that forced non helical velocity dynamics at low fluid Reynolds number does not result in a
shear--current assisted dynamo effect.
\end{abstract}

\begin{keywords}
MHD and Electrohydrodynamics --- Dynamo
\end{keywords}

\tableofcontents

\section{Introduction}

Large--scale magnetic fields in many astrophysical systems, from planets to clusters of galaxies, 
are thought to originate from dynamo action in the electrically conducting fluids in these 
objects. The standard paradigm involves amplification of seed magnetic fields, due to 
non mirror--symmetric (i.e. helical) turbulent flows, through the $\alpha$--effect \citep{Mof78, Par79}.
It is only relatively recently that the role of the mean shear in the turbulent flows is beginnng to be 
appreciated. Dynamo action due to shear and turbulence has received some attention in the astrophysical 
contexts of accretion disks \citep{VB97} and galactic disks \citep{Bla98}. It has also been demonstrated that 
shear, in conjunction with rotating turbulent convection, can drive a large--scale dynamo \citep{KKB08, HP09}. We are interested in the more specific problem of large--scale dynamo action due to  non--helical turbulence with mean shear.  Direct numerical simulations now provide strong support for such a {\it shear dynamo}. \citet{You08a} demonstrated that forced small-scale non--helical turbulence in non--rotating linear shear flows leads to exponential growth of large--scale magnetic fields. These findings were later generalized by 
\citet{You08b} to a shearing sheet model of a differentially rotating disk with a Keplerian rotation profile. The investigations of 
\citet{BRRK08} demonstrated the shear dynamo effect for a range of values of the Reynolds numbers and the shear parameter, and measured the tensorial magnetic diffusivity tensor. While the shear dynamo has been conclusively demonstrated to function, it is not yet clear what makes it work. This outstanding, unsolved problem is the focus of the present investigation.   

One possibility that has been suggested is dynamo action due to a ``fluctuating $\alpha$--effect'' in 
turbulent flows which have zero mean helicities. In this proposal, large--scale dynamo action derives from the 
interaction of mean shear with fluctuations of helicity \citep{VB97,Sok97,Pro07,BRRK08,RK08,Sch08}.
Another suggestion is that, if even transient growth makes non axisymmetric mean magnetic fields strong enough, 
they themselves might drive motions which could lead to subcritical dynamo action \citep{ROPC08}. 
Yet another possibility that has been suggested is the shear--current effect \citep{RK03,RK04,RK08}. 
In this mechanism, it is thought that the mean shear gives rise to anisotropic turbulence, which causes
an extra component of the mean electromotive force (EMF), leading to the generation of the cross--shear 
component of the mean magnetic field from the component parallel to the shear flow. However, there is no agreement 
yet whether the sign of such a coupling is favourable to the operation of a dynamo. Some analytic calculations 
\citep{RS06,RK06} and numerical experiments \citep{BRRK08} find that the sign of the shear--current 
term is unfavourable for dynamo action. A quasilinear theory of dynamo action in a linear shear flow of an 
incompressible fluid which has random velocity fluctuations was presented in \cite{SS09a,SS09b}. Unlike earlier 
analytic work which treated shear as a small perturbation, this work did not place any restriction on the 
strength of the shear. They arrived at an integro--differential equation for the evolution of the mean magnetic 
field and argued that the shear--current assisted dynamo is essentially absent. It should be noted that the 
quasilinear theory of \cite{SS09a,SS09b} assumes zero resistivity, and is valid in the limit of small velocity
correlation times when the  ``first order smoothing approximation'' (FOSA) holds. 

In this paper we present a kinematic theory of the shear dynamo that is non perturbative in the shear strength, but 
perturbative in the magnetic Reynolds number ($\rem$); this may be thought of as FOSA with finite resistivity.
Thus we are not limited to the quasilinear limit of small velocity correlation times, 
and our conclusions are rigorously valid for velocity fluctuations which have small $\rem$ but arbitrary fluid Reynolds number. 
In \textsection~2 we formulate the shear dynamo problem for small $\rem$. Using Reynolds averaging, we split the magnetic field 
into mean and fluctuating components. The equation for the fluctuations is expanded perturbatively in the small parameter, $\rem$.
Using the shearing coordinate transformation, we make an explicit calculation of the resistive Green's 
function for the linear shear flow. In \textsection~3, the magnetic fluctuations and the mean electromotive force 
(EMF) are determined to lowest order in $\rem$.  The transport coefficients are 
given in general form in terms of the two--point correlators of the velocity fluctuations. Galilean invariance is a basic symmetry in 
the problem and is the focus of \textsection~4. For Galilean invariant (G--invariant) velocity fluctuations, it is proved that the transport coefficients, although space-dependent, possess  the property of 
translational invariance in sheared coordinate space. An explicit expression for the Galilean--invariant mean 
EMF is derived. We put together all the results in \textsection~5 by deriving the integro--differential equation 
governing the time evolution of the mean magnetic field. Some important properties of this equation are discussed. 
In particular, it is shown that, in the formal limit of zero resistivity, the quasilinear results of \cite{SS09a,SS09b} are 
recovered.  We also show that the natural setting for the integro--differential equation governing mean--field 
evolution is in sheared Fourier space. We prove a theorem on the form of the two--point velocity correlator
in Fourier space; the velocity spectrum tensor and its general properties are discussed. We then express all the integral kernels in 
terms of the velocity spectrum tensor, which is the fundamental dynamical quantity that needs to be specified. Summary and conclusions are 
presented in \textsection~6.

\section{The shear dynamo problem}

\subsection{The small $\rem$ limit}

Consider a Cartesian coordinate system with unit vectors $(\ex,\ey,\ez)$ erected on a comoving patch of a
differentially rotating disk. Henceforth this will be referred to as the lab frame and we will use
notation $\bfX = (X_1,X_2,X_3)$ for the position vector, and $\tau$ for time. The fluid velocity is given by 
$(SX_1\ey + \bfv)$, where $S$ is the rate of shear parameter and $\bfv(\bfX, \tau)$ is a randomly 
fluctuating velocity field. The total magnetic field, $\bfB^{\rm tot}(\bfX, \tau)$, obeys the induction equation. 

\beq
\left(\frac{\partial}{\partial\tau} \;+\; SX_1\frac{\partial}{\partial X_2}\right)\bfB^{\rm tot} \;-\; SB^{\rm tot}_1\ey \;=\; 
\bnabla\cross\left(\bfv\cross\bfB^{\rm tot}\right) \;+\; \eta\bnabla^2\bfB^{\rm tot}\\[1em]
\label{indeqn}
\eeq
 
\noindent
It is useful to note that the induction equation is unaffected by a 
uniform rotation of the frame of reference. So our coordinate system 
can refer to an inertial frame, or to a comoving patch of a differentially rotating disk. We study a kinematic problem in this paper, so will assume that the velocity field is prescribed. We also assume that the velocity fluctuations have zero mean ($\left<\bfv\right> = {\bf 0}$), with root--mean--squared amplitude $v_{{\rm rms}}$ on some typical spatial scale $\ell$. The 
{\it magnetic Reynolds number} may be defined as $\rem = (v_{{\rm rms}}\ell/\eta)$; note that $\rem$ has been 
defined with respect to the fluctuation velocity field, not the background shear velocity field.
To address the dynamo problem, we will use the approach of the theory of mean--field electrodynamics \citep{Mof78,KR80, BS05}. Here, 
the action of the velocity fluctuations on some seed magnetic field is assumed to 
produce a total magnetic field with a well--defined {\it mean--field} $(\bfB)$ and a {\it fluctuating--field} $(\bfb)$:

\beq
\bfB^{\rm tot} \;=\; \bfB \;+\; \bfb\,,\qquad \left<\bfB^{\rm tot}\right> \;=\; \bfB\,,\qquad \left<\bfb\right> \;=\; {\bf 0}
\label{reynolds}
\eeq

\noindent 
where $\left<\;\;\right>$ denotes ensemble averaging in the sense of Reynolds. Applying Reynolds averaging to the induction equation~(\ref{indeqn}), we obtain the following equations governing the dynamics of the mean and fluctuating magnetic fields:
\begin{eqnarray}
\left(\frac{\partial}{\partial\tau} \;+\; SX_1\frac{\partial}{\partial X_2}\right)\bfB \;-\; SB_1\ey &\;=\;& 
\bnabla\cross\bfemf \;+\; \eta\bnabla^2\bfB\label{meanindeqn}\\[2em] 
\left(\frac{\partial}{\partial\tau} \;+\; SX_1\frac{\partial}{\partial X_2}\right)\bfb \;-\; Sb_1\ey &\;=\;& 
\bnabla\cross\left(\bfv\cross\bfB\right) \;+\; \bnabla\cross\left(\bfv\cross\bfb - \left<\bfv\cross\bfb\right>\right) \;+\; \eta\bnabla^2\bfb\nonumber\\[1ex]
&&\label{flucindeqn}
\end{eqnarray}
\noindent
where $\bfemf = \left<\bfv\cross\bfb\right>$ is the mean electromotive force (EMF). The first step
toward solving the problem is to solve equation~(\ref{flucindeqn}) for $\bfb$, then calculate $\bfemf$ and obtain a closed equation for the mean--field, $\bfB(\bfX, \tau)$. In the framework of the above mean--field theory, the {\it shear dynamo problem} may be posed as follows: under what conditions does the equation for $\bfB(\bfX, \tau)$ admit growing solutions~? In particular, are growing solutions possible when the 
velocity field is non--helical (i.e. when the velocity field is mirror symmetric)~?

The problem is, in general, a difficult one, but it can be approached perturbatively in the limit of small $\rem$. When $\rem\ll 1$, we can expand $\bfb$ in a series, 
\beq
\bfb \;=\; \bfb^{(0)} \;+\; \bfb^{(1)} \;+\; \bfb^{(2)} \;+\; \ldots
\label{bseries}
\eeq
\noindent where $\bfb^{(n)}$ is of order $\bfb^{(n-1)}$ multiplied by the small quantity $\rem$. 
The equations governing the time evolution of these quantitites are
\begin{eqnarray} 
\left(\frac{\partial}{\partial\tau} \;+\; SX_1\frac{\partial}{\partial X_2}\right)\bfb^{(0)} \;-\; Sb_1^{(0)}\ey &=& \bnabla\cross\left(\bfv\cross\bfB\right) \;+\; \eta\bnabla^2\bfb^{(0)}
\label{eqnhier0}\\[2em]
\left(\frac{\partial}{\partial\tau} \;+\; SX_1\frac{\partial}{\partial X_2}\right)\bfb^{(n)} \;-\; Sb_1^{(n)}\ey &=& \bnabla\cross\left(\bfv\cross\bfb^{(n-1)} - \left<\bfv\cross\bfb^{(n-1)}\right>\right) \;+\; \eta\bnabla^2\bfb^{(n)}\nonumber\\[1ex]
&& \qquad\mbox{for $n \;=\; 1,2,\ldots$}
\label{eqnhier}
\end{eqnarray}
\noindent
Note that $\bnabla\cross\left(\bfv\cross\bfB\right)$ acts as a source term for $\bfb^{(0)}$, whereas the source term for $\bfb^{(n)}$ is
 $\bnabla\cross\left(\bfv\cross\bfb^{(n-1)} - \left<\bfv\cross\bfb^{(n-1)}\right>\right)$. Once the $\bfb^{(n)}$ have been determined, the mean 
EMF can be calculated directly by 
\beq
\bfemf \;=\; \left<\bfv\cross\bfb\right> \;=\; \left<\bfv\cross\left(\bfb^{(0)} \;+\; \bfb^{(1)} \;+\; \bfb^{(2)} \;+\; \ldots\right)\right> 
\label{emfseries}
\eeq

\noindent
In this paper, we work to lowest order in $\rem$, so we need to work out only $\bfb^{(0)}$; equation~(\ref{eqnhier}) will not be used.

\subsection{The shearing coordinate transformation}

In this paper we will focus on the determination of the lowest order term, $\bfb^{(0)}$. We also assume that the fluctuating velocity field is incompressible; i.e. $\bnabla\cendot\bfv \;=\; 0$. Then the evolution of $\bfb^{(0)}$ is governed by,
\beq 
\left(\frac{\partial}{\partial\tau} \;+\; SX_1\frac{\partial}{\partial X_2}\right)\bfb^{(0)} \;-\; Sb_1^{(0)}\ey \;=\; \left(\bfB\cendot\bnabla\right)\bfv \;-\; \left(\bfv\cendot\bnabla\right)\bfB \;+\; \eta\bnabla^2\bfb^{(0)}
\label{flucindlineqn}
\eeq
\noindent
We will now solve this equation for $\bfb^{(0)}$ and determine the mean EMF. General methods of solving
equations such as equation~(\ref{flucindlineqn}) are presented in \cite{KR80}, but we prefer to
employ the shearing coordinate transformation because it is directly adapted to the problem at hand and  
greatly simplifies the task of writing down the Green's function solution. The $\left(X_1{\partial/\partial X_2}\right)$ term makes equation~(\ref{flucindlineqn}) inhomogeneous in the coordinate $X_1$. This term can be eliminated through a shearing transformation to new spacetime variables:
\beq
x_1 = X_1\,,\qquad x_2 = X_2 - S\tau X_1\,,\qquad x_3 = X_3\,,\qquad t = \tau
\label{sheartr}
\eeq
\noindent
Partial derivatives transform as
\beq
\frac{\partial}{\partial\tau} = \frac{\partial}{\partial t} - 
Sx_1\frac{\partial}{\partial x_2}\,,\quad 
\frac{\partial}{\partial X_1} = \frac{\partial}{\partial x_1} - 
St\frac{\partial}{\partial x_2}\,,\quad 
\frac{\partial}{\partial X_2}
= \frac{\partial}{\partial x_2}\,,\quad
\frac{\partial}{\partial X_3}
= \frac{\partial}{\partial x_3}
\label{partialsh}
\eeq
\noindent 
Therefore
\begin{eqnarray}
\bnabla^2 \equiv \frac{\partial^2}{\partial X_p \partial X_p} &=& \left( \frac{\partial}{\partial x_p} - St\delta _{p1} \frac{\partial}{\partial x_2} \right)^2 \nonumber\\[2ex]
&=& \frac{\partial^2}{\partial x_p \partial x_p} - 2St \frac{\partial ^2}{\partial x_1 \partial x_2} + S^2t^2\frac{\partial^2}{\partial x^2_2}
\label{delsq} 
\end{eqnarray}
\noindent
We also define new variables, which are component--wise equal to the old variables: 
\beq
\bfH(\bfx, t) \;=\; \bfB(\bfX, \tau)\,,\qquad
\bfh(\bfx, t) \;=\; \bfb^{(0)}(\bfX, \tau)\,,\qquad
\bfu(\bfx, t) \;=\; \bfv(\bfX, \tau)
\label{newvar}
\eeq
\noindent
Note that, just like the old variables, the new variables are expanded in the fixed Cartesian basis of the lab frame. For example, $\bfH = H_1\ex + H_2\ey + H_3\ez$, where $H_i(\bfx, t) = B_i(\bfX, \tau)$, and similarly for the other variables. In the new variables, equation~(\ref{flucindlineqn}) becomes,
\beq
\frac{\partial\bfh}{\partial t} \;-\; Sh_1\ey \;=\; \left(\bfH\cendot\frac{\partial}{\partial\bfx}
- StH_1\frac{\partial}{\partial x_2}\right)\bfu \;-\; \left(\bfu\cendot\frac{\partial}{\partial\bfx} - Stu_1\frac{\partial}{\partial x_2}\right)\bfH \;+\; \eta\bnabla^2\bfh
\label{newvareqn}
\eeq
\noindent
which can be expressed in component form as
\beq
\left(\frac{\partial}{\partial t} \;-\; \eta \bnabla^2\right)h_m(\bfx,t) \;=\; q_m(\bfx,t) 
\label{indflsh}
\eeq
\noindent
where $\bnabla^2$ is given by equation~(\ref{delsq}), and 
\beq
q_m(\bfx,t) \;=\; \left[ H_l - St \delta_{l2} H_1 \right]u_{ml} \;-\;
\left[ u_l - St \delta_{l2} u_1 \right]H_{ml} \;+\; S \delta_{m2} h_1 
\label{sourceq} 
\eeq
\noindent
We have used notation $u_{ml} =(\partial u_m/\partial x_l)$ and $H_{ml} = (\partial H_m/\partial x_l)$. 
Below we construct the Green's function for equation~(\ref{indflsh}). 

\subsection{The resistive Green's function for a linear shear flow}

Equation~(\ref{indflsh}) is linear, homogeneous in $\bfx$ and inhomogeneous in $t$. Therefore, the general solution can be written in the form,
\begin{eqnarray}
h_m(\bfx,t) \;=\; && \int \mathrm{d}^3x'\,G_{\eta}(\bfx-\bfx',t,s)\;h_m(\bfx',s)\nonumber\\[3ex]
&+& \int_s^t \mathrm{d}t'\int \mathrm{d}^3x'\,G_{\eta}(\bfx-\bfx',t,t')\;q_m(\bfx',t')\;;\qquad\mbox{for any $s < t\,$,} 
\label{grfnsoln}
\end{eqnarray}
\noindent
where $G_{\eta}(\bfx,t,t')$ is the {\it resistive Green's function} for the linear shear flow, which satisfies, 

\begin{subequations}
\beq
\left(\frac{\partial}{\partial t} - \eta \bnabla^2 \right)G_{\eta}(\bfx,t,t') \;=\; 0  
\label{grfnpr1}
\eeq
\beq
\lim_{t'\to t_-}\;G_{\eta}(\bfx,t,t') \;=\; \delta^3 (\bfx)  
\label{grfnpr2}
\eeq
\beq
G_{\eta}(\bfx,t,t')\quad\mbox{is non--zero only when}\quad 0\leq t' < t. 
\label{grfnpr3}
\eeq
\beq
G_{\eta}(\bfx-\bfx',t,t_0) \;=\; \int \mathrm{d}^3x^{''}\,G_{\eta}(\bfx-\bfx^{''},t,s)\,G_{\eta}(\bfx^{''}-\bfx',s,t_0)
\,;\quad\mbox{for $t_0 < s < t\,$.}
\label{grfnpr4}
\eeq
\end{subequations}

Let us define the spatial Fourier transform of the Green's function as,
\beq
\widetilde{G}_{\eta}(\bfk,t,t') \;=\; \int \mathrm{d}^3x \, \exp{(-\mathrm{i}\,\bfk\cendot\bfx)} G_{\eta}(\bfx,t,t')
\label{frgr}
\eeq
\noindent
where $\bfk$, being conjugate to the sheared coordinate vector $\bfx$, can be regarded as a 
{\it sheared wavevector}. Then 

\beq
\frac{\partial\,\widetilde{G}_{\eta}}{\partial t} \;+\; \eta \, K^2(\bfk,t) \, \widetilde{G}_{\eta} \; = \; 0
\label{frgr3}
\eeq

\noindent
where, in equation~(\ref{frgr3}), $K^2(\bfk,t)=(k_1 - Stk_2)^2 + k_2^2 + k_3^2$. It is now straightforward to write down the solution:
\begin{eqnarray}
\widetilde{G}_{\eta}(\bfk,t,t') &=& \exp{\left[-\eta \int_{t'}^t \mathrm{d}s \, K^2(\bfk,s)\right]}\nonumber\\[3ex]
&=& \exp{\left[-\eta\left(k^2(t-t') - S\,k_1\,k_2(t^2-t^{\prime 2}) + \frac{S^2}{3}\,k_2^2 (t^3-t^{\prime 3}) \right)\right]}
\label{frgrsoln}
\end{eqnarray}
\noindent
where, as per equation~(\ref{grfnpr3}) above, $t > t'\,$.
Note also that $\widetilde{G}_{\eta}(\bfk,t,t')$ is a positive quantity which takes values between $0$ and $1$, and that it is an even function of $\bfk$ and $k_3$.

We now take the inverse Fourier transform of equation~(\ref{frgrsoln}) to get $G_{\eta}(\bfx,t,t')$. It is 
convenient to write this as
\beq
G_{\eta}(\bfx,t,t') \,=\, \int \frac{\mathrm{d}^3k}{(2 \pi)^3} \exp{\left [\mathrm{i}\,\bfk\cendot\bfx \, - \eta(t-t') \{ k^2 + T_{ij} \,k_i \,k_j \} \right ]}
\label{invfrtr}
\eeq
\noindent where $T_{ij}$ is a $2\times 2$ symmetric matrix whose elements are given by,
\beq
T_{11} \;=\; 0,\quad T_{12} \;=\; T_{21} \;=\; -\frac{S}{2}(t+t'),\quad T_{22} \;=\; 
\frac{S^2}{3}(t^2 + t t' + t^{\prime 2})
\label{matrixq}
\eeq
\noindent
The integral in equation~(\ref{invfrtr}) can be evaluated by diagonalising the matrix
$T_{ij}$. It proves useful to express $G_{\eta}(\bfx,t,t')$ in terms of the principal--axes coordiates, 
$\overline{\bfx} = \left(\overline{x}_1, \,\overline{x}_2, \,\overline{x}_3\right)$. These are defined by
the orthogonal transformation, 
\beq
\left(
\begin{array}{c}
\overline{x}_1\\[3ex]
\overline{x}_2\\[3ex]
\overline{x}_3
\end{array}
\right) \;=\;
\left(
\begin{array}{ccc}
\cos\theta & \qquad  \sin\theta & \qquad 0\\[3ex]
-\sin\theta & \qquad \cos\theta & \qquad 0\\[3ex]
0 & \qquad 0 & \qquad 1
\end{array}
\right)
\left(
\begin{array}{c}
x_1\\[3ex]
x_2\\[3ex]
x_3
\end{array}
\right)
\eeq
\noindent which is a time--dependent rotation of the coordinate axes in the $x_1$--$x_2$ plane. The angle of 
rotation, $\theta$, is determined by
\begin{eqnarray}
\tan\theta &=& f \;+\; \sqrt{1+f^2}\nonumber\\[3ex]
f &=&  -\frac{1}{3}\,\frac{S(t^2 + t t' + t^{\prime 2})}{(t+t')}
\label{thetadef}
\end{eqnarray}
\noindent
Note that $\theta$ depends on the shear parameter, $S$, and the times, $t$ and $t'$. 
Let us define the dimensionless quantitites,
\begin{eqnarray}
\sigma_1 &=& \left[1 \;-\; \frac{S}{2}(t + t')\,\tan\theta\right]^{1/2}\nonumber\\[3ex]
\sigma_2 &=& \left[1 \;+\; \frac{S}{2}(t + t')\,\cot\theta\right]^{1/2}
\label{a12def}
\end{eqnarray}

\begin{figure}
\includegraphics[scale=0.55, angle=-90]{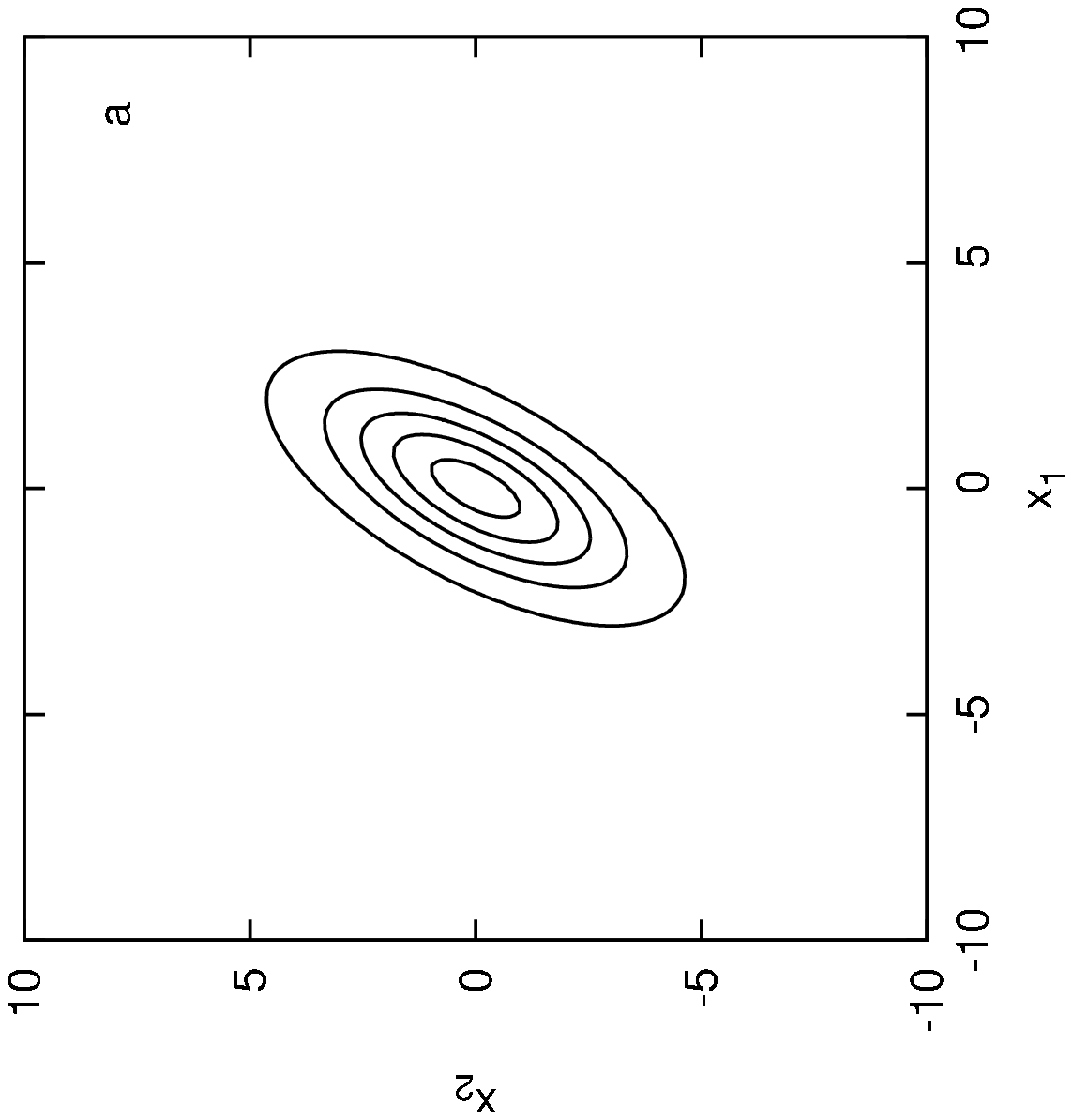}
\hskip0.2cm
\includegraphics[scale=0.55, angle=-90]{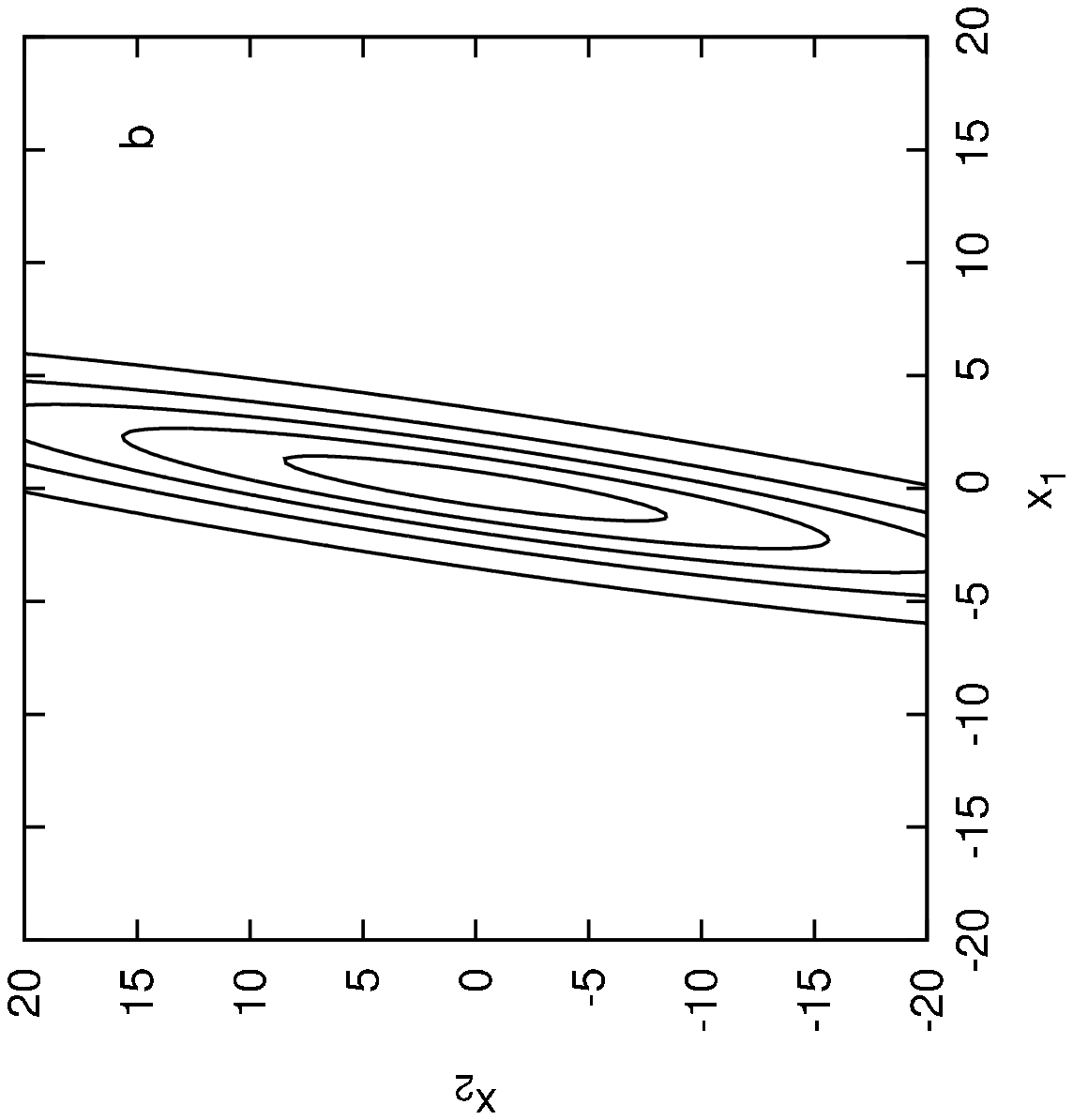}
\vskip0.2cm 
\includegraphics[scale=0.55,angle=-90]{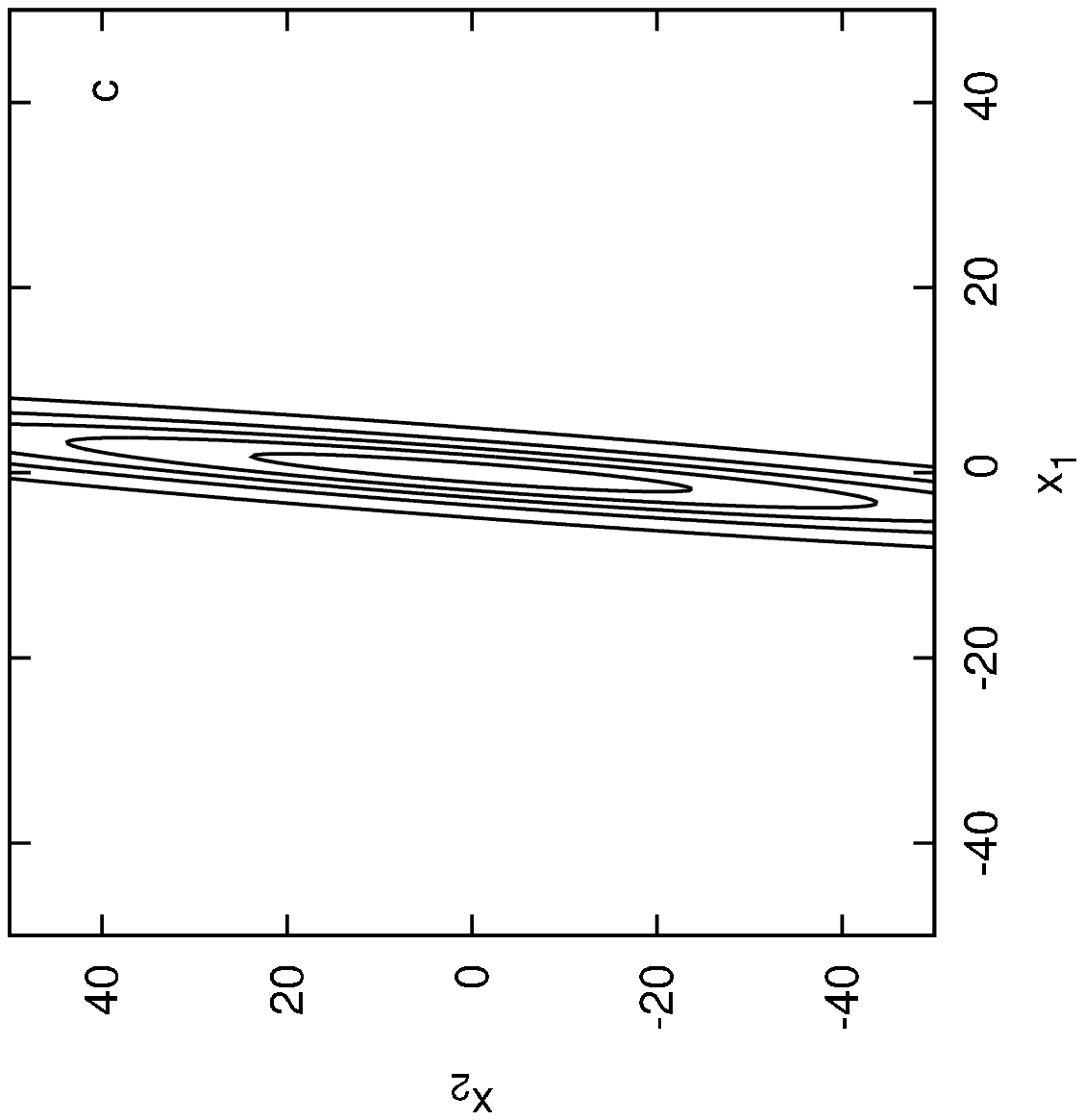}
\hskip0.2cm
\includegraphics[scale=0.55,angle=-90]{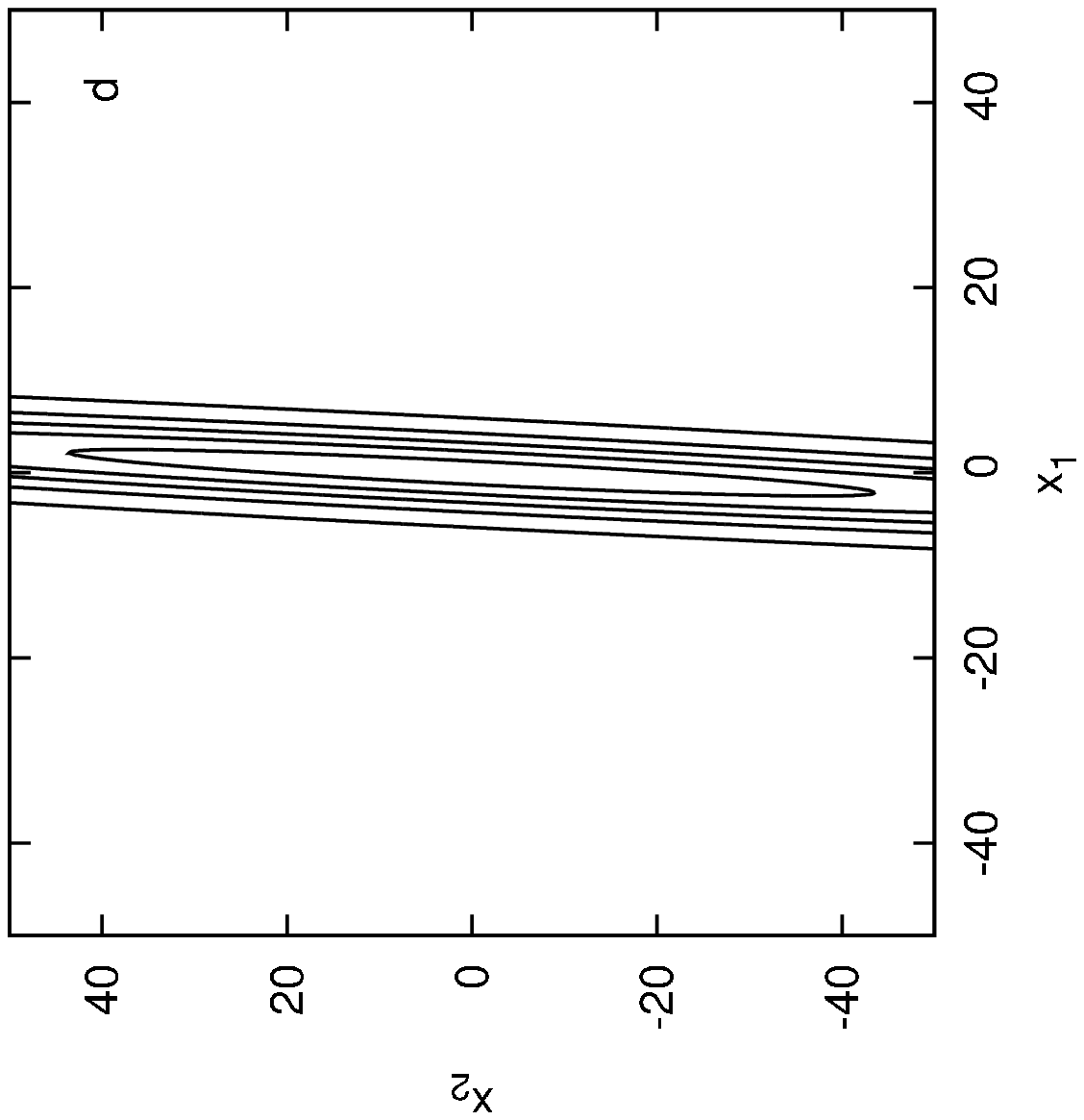}
\caption{Isocontours of the resistive Green's function $G_{\eta}(\bfx,t,t')$ plotted in 
the $x_1$--$x_2$ plane of the shearing coordinate system, for $t'=0$ at four different values of $t$. 
Units are such that $S=-2\,;\eta=1$. Five isocontours at $90\%$, $70\%$, $50\%$, $30\%$ and $10\%$ of the maximum value are displayed.
Panels (a), (b), (c) and (d) correspond to times $t=1$, $t=5$, $t=10$ and $t=15$.}
\label{green_fn}
\end{figure}

\noindent
Now we can write the Green's function as a {\it sheared heat kernel},
\begin{eqnarray}
G_{\eta}(\bfx,t,t')\,=\, && \left[ 4 \pi \eta(t-t') \right]^{-3/2} \left[ 1 \,+\, \frac{S^2}{12} (t - t')^2 \right]^{-1/2} \times \nonumber \\ [2ex]
&\times& \exp{ \left[-\frac{1}{4 \eta(t-t')} \left( \frac{\overline{x}_1^2}{\sigma_1^2} \,+\, \frac{\overline{x}_2^2}{\sigma_2^2} 
\,+\, \overline{x}_3^2 \right) \right] }\,,
\end{eqnarray}
\noindent
which is equivalent to the one first derived in \cite{KR71}. 

We now note some properties of the Green's function. For convenience we choose the shear parameter, $S$, to be negative: 
then  the quantities, $f \geq 0$, $0\leq\theta\leq \pi/2$,  $\sigma_1 \geq\ 1$ and $0\leq \sigma_2\leq 1\,$.
At fixed $t$ and $t'$, the Green's function is a Gaussian with {\it long axis} along 
$\overline{x}_1$, {\it short axis} along $\overline{x}_2$, and the {\it intermediate axis}
along $\overline{x}_3$. To obtain some idea of the behaviour of the Green's function, it is useful to plot 
isocontours in the sheared coordinate space $(x_1, x_2, x_3)$ at different values of $t$ and $t'$. 
Figure~(\ref{green_fn}) displays isocontours in the $x_1$--$x_2$ plane at four different values of 
$t$ for $t'=0$; we have chosen $x_3=0$ and $t'=0$ in the interests of brevity of presentation. The figure is plotted in shearing coordinates, with respect to which diffusion is anisotropic
and there is no advection. It may be noted that the Green's function shows  a shearing motion \emph{against} the direction of the actual shear.
As $t$ increases from zero to infinity, $\theta$ (which is the angle the long axis makes with the $x_1$--axis)
increases from $45^{\circ}$ to $90^{\circ}$, and all the principal axes increase without bound.

\section{Magnetic fluctuations and mean EMF at small $\rem$}

\subsection{Explicit solution for $\bfh(\bfx, t)$}

We are interested in the particular solution to equation~(\ref{indflsh}) (i.e. the {\it forced solution}) which vanishes at $t=0$. This can be written as
\beq
h_m(\bfx,t) \;=\; \int_0^t \mathrm{d}t' \int \mathrm{d}^3x' \, G_{\eta}(\bfx-\bfx',t,t') \; q_m(\bfx',t')
\label{solgrfn}
\eeq
\noindent
Substituting the expression for $q_m$ from equation~(\ref{sourceq}) in equation~(\ref{solgrfn}), 
we have
\begin{eqnarray}
h_m(\bfx,t) &=& \int_0^t \mathrm{d}t' \int \mathrm{d}^3x' \, G_{\eta}(\bfx-\bfx',t,t') \times \nonumber\\[2ex]
&&\qquad\qquad\times \left\{\left[ H'_l - St'\delta_{l2} H'_1 \right]u'_{ml} \;-\; \left[u'_l - St'\delta_{l2}u'_1\right]H'_{ml}\right\}\nonumber\\[3ex]
&& \;+\; S \delta_{m2}\int_0^t \mathrm{d}t'\int \mathrm{d}^3x' \, G_{\eta}(\bfx-\bfx',t,t') \; h_1(\bfx', t')
\label{h_msoln1}
\end{eqnarray}
\noindent
where primes denote evaluation at spacetime point $(\bfx', t')$. The solution is not yet
in explicit form because the last term on the right side contains the unknown quantity $h_1(\bfx', t')$.
Thus we need to work out the integral
\begin{eqnarray}
\int_0^t \mathrm{d}t'\int \mathrm{d}^3x'\,G_{\eta}(\bfx-\bfx',t,t')\,h_1(\bfx', t') &=& \int_0^t \mathrm{d}t'\int \mathrm{d}^3x'\, G_{\eta}(\bfx-\bfx',t,t')\times\nonumber\\[2ex]
&&\times\int_0^{t'} \mathrm{d}t^{''} \int \mathrm{d}^3x^{''}\, G_{\eta}(\bfx'-\bfx^{''},t',t^{''})\times\nonumber\\[2ex]
&\times& \left\{\left[ H_l^{''} - St^{''} \delta_{l2} H_1^{''} \right]u_{1l}^{''} - \left[u_l^{''} - St^{''} \delta_{l2} u_1^{''}\right]H_{1l}^{''}\right\}\nonumber
\end{eqnarray}
\noindent
where $''$ means evaluation at spacetime point $(\bfx^{''},t^{''})$. Note that, on the right side, $\bfx'$ occurs only in the Green's functions. So, by using the 
property given in equation~(\ref{grfnpr4}), the integral over $\bfx'$ can be performed. Then 
\begin{eqnarray}
\int_0^t \mathrm{d}t'\int \mathrm{d}^3x'\,G_{\eta}(\bfx-\bfx',t,t')\,h_1(\bfx', t') &=& \int_0^t \mathrm{d}t'
\int_0^{t'} \mathrm{d}t^{''}\int \mathrm{d}^3x^{''}\, G_{\eta}(\bfx-\bfx^{''},t,t^{''})\times\nonumber\\[2ex]
&\times& \left\{\left[ H_l^{''} - St^{''} \delta_{l2} H_1^{''} \right]u_{1l}^{''} - \left[u_l^{''} - St^{''} \delta_{l2} u_1^{''}\right]H_{1l}^{''}\right\}\nonumber
\end{eqnarray}
\noindent
The double--time integrals can be reduced to a single--time integrals because of the
following simple identity. For any function $f(\bfx, t)$, we have
\begin{eqnarray}
\int_0^t \mathrm{d}t'\int_0^{t'} \mathrm{d}t^{''} \int \mathrm{d}^3x^{''}\,f(\bfx^{''}, t^{''}) &=& 
\int_0^t \mathrm{d}t^{''} \int \mathrm{d}^3x^{''}\,f(\bfx^{''}, t^{''})\int_{t^{''}}^{t} \mathrm{d}t' \nonumber \\ [2ex]
&=&\int_0^t \mathrm{d}t^{''}\,(t - t^{''}) \int \mathrm{d}^3x^{''}\,f(\bfx^{''}, t^{''})\nonumber\\[2ex]
&=& \int_0^t \mathrm{d}t'\,(t-t') \int \mathrm{d}^3x'\,f(\bfx', t')\nonumber
\end{eqnarray}
\noindent
where in the last equality we have merely replaced the dummy integration variables $(\bfx^{''},t^{''})$ by $(\bfx',t')$. Then we have
\begin{eqnarray}
\int_0^t \mathrm{d}t'\int \mathrm{d}^3x'\,G_{\eta}(\bfx-\bfx',t,t')\; h_1(\bfx', t') &=& 
\int_0^t \mathrm{d}t'\,(t-t') \int \mathrm{d}^3x'\,G_{\eta}(\bfx-\bfx^{''},t,t^{''})\times\nonumber\\[2ex]
&\times&\left\{\left[ H'_l - St' \delta_{l2} H'_1 \right]u'_{1l} - \left[u'_l - St' \delta_{l2} u'_1\right]H'_{1l}\right\}\nonumber
\end{eqnarray}
\noindent
Therefore the forced solution to equation~(\ref{indflsh}) can finally be written in explicit form as
\begin{eqnarray}
h_m(\bfx,t) &=& \int_0^t \mathrm{d}t' \int \mathrm{d}^3x' \, G_{\eta}(\bfx-\bfx',t,t')\,
\left[u'_{ml} + S(t-t')\delta_{m2} u'_{1l}\right] \times \nonumber \\[2ex]
&& \qquad\qquad\qquad\qquad\qquad \times \left[H'_l - St'\delta_{l2}H'_1\right]\nonumber\\[3ex]
&-& \int_0^t \mathrm{d}t'\int \mathrm{d}^3x'\,G_{\eta}(\bfx-\bfx',t,t')\,
\left[H'_{ml} + S(t-t')\delta_{m2}H'_{1l}\right] \times \nonumber \\[2ex]
&& \qquad\qquad\qquad\qquad\qquad \times \left[u'_l - St'\delta_{l2}u'_1\right]
\label{hsoln}
\end{eqnarray}
\noindent
This gives the magnetic fluctuation to lowest order in $\rem$.

\subsection{Explicit expression for the mean EMF}

To lowest order in $\rem$, the mean EMF is given by $\bfemf = \left<\bfv\cross\bfb^{(0)}\right> = \left<\bfu\cross\bfh\right>$, where equation~(\ref{hsoln}) for $\bfh$ should be substituted. 
The averaging, $\left<\;\;\right>$, acts only on the velocity variables but not the mean field; i.e.
$\left<\bfu\bfu\bfH\right> = \left<\bfu\bfu\right>\bfH$ etc. After interchanging the dummy indices 
$(l,m)$ in the last term, the mean EMF is given in component form as 
\begin{eqnarray}
{\cal E}_i(\bfx, t) \;=\;&& \epsilon_{ijm}\left<u_jh_m\right>\nonumber\\[3ex]
\;=\;&& \int_0^t \mathrm{d}t' \int \mathrm{d}^3x'\,G_{\eta}(\bfx-\bfx',t,t') \left[\widehat{\alpha}_{il}(\bfx, t, \bfx', t') \;+\; S(t-t')\widehat{\beta}_{il}(\bfx, t, \bfx', t') \right] \times \nonumber \\ [2ex]
&&\qquad\qquad\times \left[H'_l \;-\; St'\delta_{l2}\,H'_1\right]\nonumber\\[3ex]
&-&\int_0^t \mathrm{d}t' \int \mathrm{d}^3x'\,G_{\eta}(\bfx-\bfx',t,t') \left[\,\widehat{\eta}_{iml}(\bfx, t, \bfx', t') \;-\; St'\delta_{m2}\,\widehat{\eta}_{i1l}(\bfx, t, \bfx', t')\right] \times \nonumber \\ [2ex]
&&\qquad\qquad\times \left[H'_{lm} \;+\; S(t-t')\delta_{l2}\,H'_{1m}\right]\nonumber\\[2ex]
&& \label{emf}
\end{eqnarray}
\noindent
Here, $(\widehat{\alpha}\,,\widehat{\beta}\,,\widehat{\eta}\,)$, are {\it transport coefficients}, which
are defined in terms of the $\bfu\bfu$ velocity correlators by
\begin{eqnarray}
\widehat{\alpha}_{il}(\bfx, t, \bfx', t') &\;=\;& \epsilon_{ijm}\left<u_j(\bfx, t) \,u_{ml}(\bfx', t')\right>\nonumber\\[1ex]
\widehat{\beta}_{il}(\bfx, t, \bfx', t') &\;=\;& \epsilon_{ij2}\left<u_j(\bfx, t) \,u_{1l}(\bfx', t')\right>\nonumber\\[1ex]
\widehat{\eta}_{iml}(\bfx, t, \bfx', t') &\;=\;& \epsilon_{ijl}\left<u_j(\bfx, t)  \,u_m(\bfx', t')\right>
\label{trcoeffs}
\end{eqnarray}
\noindent
It is also useful  to consider velocity statistics in terms of $\bfv\bfv$ velocity correlators, because this 
is referred to the lab frame. By definition, from (equation~\ref{newvar}),
\beq
u_m(\bfx, t) \;=\; v_m(\bfX(\bfx,t), t)
\label{uvtr}
\eeq
\noindent
where
\beq
X_1 \;=\; x_1\,,\qquad X_2 \;=\; x_2 + Stx_1\,,\qquad X_3 \;=\; x_3\,,\qquad \tau \;=\; t
\label{invshtr}
\eeq
\noindent
is the inverse of the shearing transformation given in equation~(\ref{sheartr}). Using
\beq
\frac{\partial}{\partial x_l} \;=\; \frac{\partial}{\partial X_l} \;+\; S\tau\,\delta_{l1}\,\frac{\partial}{\partial X_2}
\label{invpartial}
\eeq
\noindent
the velocity gradient $u_{ml}$ can be written as
\beq
u_{ml} \;\equiv\; \frac{\partial u_m}{\partial x_l}
\;=\; \left(\frac{\partial}{\partial X_l} \;+\; S\tau\,\delta_{l1}\,\frac{\partial}{\partial X_2}\right)\,v_m \;=\; v_{ml} \;+\; S\tau\,\delta_{l1}\,v_{m2}
\label{uvgrad}
\eeq
\noindent
where $v_{ml} = (\partial v_m/\partial X_l)$. Then the transport coefficients are given in terms of the 
$\bfv\bfv$ velocity correlators by 
\begin{eqnarray}
\widehat{\alpha}_{il}(\bfx, t, \bfx', t') &\;=\;& \epsilon_{ijm}\left[\left<v_j(\bfX, t)\,v_{ml}(\bfX', t')\right>
\;+\; St'\,\delta_{l1}\,\left<v_j(\bfX, t)\,v_{m2}(\bfX', t')\right>\right]\nonumber\\[1ex]
\widehat{\beta}_{il}(\bfx, t, \bfx', t') &\;=\;& \epsilon_{ij2}\left[\left<v_j(\bfX, t)\,v_{1l}(\bfX', t')\right>
\;+\; St'\,\delta_{l1}\,\left<v_j(\bfX, t)\,v_{12}(\bfX', t')\right>\right]\nonumber\\[1ex]
\widehat{\eta}_{iml}(\bfx, t, \bfx', t') &\;=\;& \epsilon_{ijl}\left<v_j(\bfX, t)\,v_m(\bfX', t')\right>
\label{trcoeffsvv}
\end{eqnarray}
\noindent
where $\bfX$ and $\bfX'$ are shorthand for
\beq
\bfX \;=\; \left(x_1\,,x_2 + Stx_1\,,x_3\right)\,,\qquad\bfX' \;=\; \left(x_1'\,,x_2' + St'x_1'\,,x_3'\right)
\label{xxprime}
\eeq
Equation~(\ref{emf}), together with (\ref{trcoeffs}) or (\ref{trcoeffsvv}), gives the mean EMF in 
general form. $\bfX$ can be thought of as the coordinates of the origin at time $t$ of an observer {\it comoving} with the background shear flow, who was at $\bfx$ at time equal to zero. Similarly, $\bfX'$
can be thought of as the coordinates of the origin at time $t'$ of an observer {\it comoving} with the background shear flow, who was at $\bfx'$ at time equal to zero. 

\section{Galilean--invariant velocity statistics}

\subsection{Galilean invariance of the induction equation}

The induction equation~(\ref{indeqn}) for the total magnetic field --- and also equations~(\ref{meanindeqn}) and (\ref{flucindeqn}) for the mean and fluctuating components --- have a fundamental invariance property
relating to measurements made by a special subset of all observers, called {\it comoving observers} in \cite{SS09a,SS09b}. A comoving observer translates with the velocity of the background shear flow, and 
such an observer can be labelled by the coordinates, $\bfxi = (\xi_1, \xi_2, \xi_3)$, 
of her origin at time $\tau=0$. At any time $\tau$, the origin is at position, 
\beq
\bfX_c(\bfxi,\tau) \;=\; \left(\xi_1\,, \xi_2 + S\tau \xi_1\,, \xi_3\right)
\label{orgvector}
\eeq
\noindent
An event with spacetime coordinates $(\bfX, \tau)$ in the lab frame has spacetime coordinates 
$(\tilde{\bfX}, \tilde{\tau})$ with respect to the comoving observer, given by
\beq
\tilde{\bfX} \;=\; \bfX \;-\; \bfX_c(\bfxi,\tau)\,,\qquad \tilde{\tau} \;=\; \tau - \tau_0
\label{coordtr}
\eeq
\noindent 
where the arbitrary constant $\tau_0$ allows for translation in time as well. 

Let $\left[\tilde{\bfB^{\rm tot}}(\tilde{\bfX}, \tilde{\tau})\,, \tilde{\bfB}(\tilde{\bfX}, \tilde{\tau})\,, \tilde{\bfb}(\tilde{\bfX}, \tilde{\tau})\,,\tilde{\bfv}(\tilde{\bfX}, \tilde{\tau})\right]$ denote the 
total, the mean, the fluctuating magnetic fields and the fluctuating velocity field, respectively, as 
measured by the comoving observer. As explained in \cite{SS09a,SS09b}, these are all equal to the respective quantities measured in the lab frame:
\beq
\left[\tilde{\bfB^{\rm tot}}(\tilde{\bfX}, \tilde{\tau})\,, \tilde{\bfB}(\tilde{\bfX}, \tilde{\tau})\,,
\tilde{\bfb}(\tilde{\bfX}, \tilde{\tau})\,,\tilde{\bfv}(\tilde{\bfX}, \tilde{\tau})\right] \;=\; 
\left[\bfB^{\rm tot}(\bfX, \tau)\,,\bfB(\bfX, \tau)\,,\bfb(\bfX, \tau)\,,\bfv(\bfX, \tau)\right]
\label{fields}
\eeq
\noindent
It is proved in \cite{SS09a,SS09b} that equations~(\ref{indeqn}), (\ref{meanindeqn}) and (\ref{flucindeqn}) are invariant under the simultaneous transformations given in equations~(\ref{coordtr}) and (\ref{fields}). This symmetry property is actually invariance under a subset of the full ten--parameter Galilean group, parametrized
by the five quantities $\left(\xi_1, \xi_2, \xi_3, \tau_0, S\right)$; for brevity we will refer to this restricted symmetry as Galilean invariance, or simply GI. 

It is important to note that the lab and comoving frames need not constitute inertial coordinate
systems. One of the main applications of our theory is to the {\it shearing sheet}, which is a 
local description of a differentially rotating disk. In this case the velocity field will be affected 
by Coriolis forces. The only requirement is that the magnetic field satisfies the induction equation~(\ref{indeqn}). 

\subsection{Galilean--invariant velocity correlators}

We now explore the consequences of requiring that the statistics of 
the velocity fluctuations be Galilean--invariant. We consider the $n$--point velocity correlator measured by the observer in the lab frame. Let this observer
correlate $v_{j_1}$ at spacetime location $(\bfR_1, \tau_1)$, with 
$v_{j_2}$ at spacetime location $(\bfR_2, \tau_2)$, and so on upto
$v_{j_n}$ at spacetime location $(\bfR_n, \tau_n)$. Now consider a comoving observer,
the position vector of whose origin is given by $\bfX_c(\bfxi,\tau)$ 
of equation~(\ref{orgvector}). An identical experiment performed by 
this observer must yield the same results, the measurements now made
at the spacetime points denoted by $\left(\bfR_1 +\bfX_c(\bfxi,\tau_1), \tau_1\right);
\left(\bfR_2 +\bfX_c(\bfxi,\tau_2), \tau_2\right);\ldots\,;\left(\bfR_n +\bfX_c(\bfxi,\tau_n), \tau_n\right)$.
If the velocity statistics is GI, the $n$--point velocity correlator must satisfy the condition
\beq
\left<v_{j_1}(\bfR_1, \tau_1)\,\ldots v_{j_n}(\bfR_n, \tau_n)\right> \;=\; 
\left<v_{j_1}(\bfR_1 + \bfX_c(\bfxi,\tau_1), \tau_1)\,\ldots v_{j_n}(\bfR_n + \bfX_c(\bfxi,\tau_n), \tau_n)\right> 
\label{ginvacorrn}
\eeq
\noindent
for all $(\bfR_1,\ldots\bfR_n\,; \tau_1,\ldots\tau_n\,; \bfxi)$. 

In the low $\rem$ limit, we require only the two--point velocity correlators, for which
\beq
\left<v_i(\bfR, \tau)\,v_j(\bfR', \tau')\right> \;=\; 
\left<v_i(\bfR + \bfX_c(\bfxi,\tau), \tau)\,v_j(\bfR' + \bfX_c(\bfxi,\tau'), \tau')\right> 
\label{ginvacorr}
\eeq
\noindent
for all $(\bfR, \bfR', \tau, \tau', \bfxi)$. We also need to work out the correlation between
velocities and their gradients:
\begin{eqnarray}
\left<v_i(\bfR, \tau)\,v_{jl}(\bfR', \tau')\right> &\;=\;&  
\frac{\partial}{\partial R'_l}\left<v_i(\bfR, \tau)\,v_j(\bfR', \tau')\right>\nonumber\\[2ex]
&\;=\;& \frac{\partial}{\partial R'_l}\left<v_i(\bfR + \bfX_c(\bfxi,\tau), \tau)\,v_j(\bfR' + \bfX_c(\bfxi,\tau'), \tau')\right>\nonumber\\[2ex]
&\;=\;& \left<v_i(\bfR + \bfX_c(\bfxi,\tau), \tau)\,v_{jl}(\bfR' + \bfX_c(\bfxi,\tau'), \tau')\right>
\label{ginvder}
\end{eqnarray}

\begin{figure}
\centerline{\includegraphics[scale=0.5]{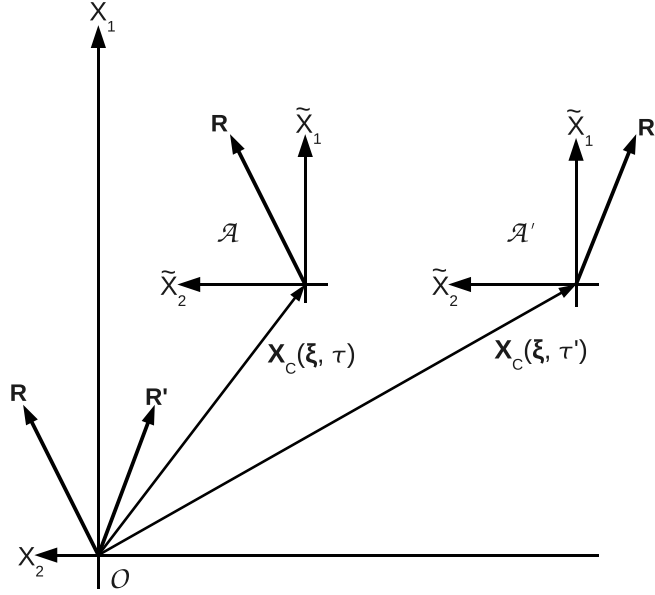}}
\caption{Galilean invariance of the two--point velocity correlator given in equation~(\ref{ginvacorr}). {\cal O} labels 
the observer in the laboratory frame who correlates the velocity fluctuation at location $\bfR$ at time $\tau$ with 
the velocity fluctuation at location $\bfR'$ at a later time $\tau'$. ${\cal A}$ and ${\cal A}'$ label  a comoving 
observer the origin of whose cooordinate axes is at $\bfxi$ at the initial time, and who makes an equivalent measurement 
at the times $\tau$ and $\tau'$.}
\label{gal_inv}
\end{figure}

\noindent
We want to choose $(\bfR, \bfR', \tau, \tau', \bfxi)$ as functions of $(\bfx, \bfx', t, t')$ such that 
we can use equations~(\ref{ginvacorr}) and (\ref{ginvder}) to simplify the velocity correlators in equation~(\ref{trcoeffsvv}). We note that equations~(\ref{xxprime}) and (\ref{orgvector}) give
\beq
\bfX \,=\, \bfX_c(\bfx,t) \, , \qquad\bfX' \,=\, \bfX_c(\bfx',t')
\eeq
\noindent
It is therefore natural to choose
\beq
\tau \,=\, t \, , \qquad \tau' \,=\, t'
\eeq
\noindent
Thus the velocity correlators we require can now be written as
\begin{eqnarray}
 \left<v_i(\bfX, t)\,v_j(\bfX', t')\right> &\;=\;& 
\left<v_i(\bfX_c(\bfx,t), t)\,v_j(\bfX_c(\bfx',t'), t')\right> \nonumber \\ [3ex]
\left<v_i(\bfX, t)\,v_{jl}(\bfX', t')\right> &\;=\;& 
\left<v_i(\bfX_c(\bfx,t), t)\,v_{jl}(\bfX_c(\bfx',t'), t')\right>
\label{ginvv1}
\end{eqnarray}
\noindent
Comparing equation~(\ref{ginvv1}) with equations~(\ref{ginvacorr}) and (\ref{ginvder}), we see that if 
we choose
\beq
\bfR = \bfX_c(\bfx,t), \qquad\bfR' = \bfX_c(\bfx',t')
\eeq
\noindent
then equation~(\ref{ginvv1}), together with equations (\ref{orgvector}), (\ref{ginvacorr}) and (\ref{ginvder}) implies that
\begin{eqnarray}
\left< v_i (\bfX,t) v_j (\bfX',t')\right> &=& \left< v_i(\bfR,\tau)v_j(\bfR',\tau')\right> \nonumber \\ [2ex]
&=& \left< v_i(\bfR+\bfX_c(\bfxi,\tau),\tau)v_j(\bfR'+\bfX_c(\bfxi,\tau'),\tau') \right> \nonumber \\ [2ex]
&=& \left< v_i(\bfX_c(\bfx+\bfxi,t),t)v_j(\bfX_c(\bfx'+\bfxi,t'),t') \right> \nonumber 
\end{eqnarray}
\noindent
Similarly
\begin{eqnarray}
\left< v_i(\bfX,t)v_{jl}(\bfX',t') \right> &=& \left< v_i(\bfX_c(\bfx+\bfxi,t),t)v_{jl}(\bfX_c(\bfx'+\bfxi,t'),t') \right>
\end{eqnarray}
\noindent
Now it is natural to choose
\beq
\bfxi = -\frac{1}{2}(\bfx+\bfx')
\eeq
\noindent
Then
\begin{eqnarray}
\left< v_i(\bfX,\tau)v_j(\bfX',\tau')\right> &=& \left< v_i \left(\bfX_c \left(\frac{\bfx-\bfx'}{2},t \right),t \right) v_j \left(\bfX_c \left(\frac{\bfx'-\bfx}{2},t' \right),t' \right) \right>\nonumber\\[3ex]
&=& R_{ij}(\bfx-\bfx',t,t')
\label{rdef}
\end{eqnarray}
\noindent
Similarly,
\begin{eqnarray}
\left< v_i(\bfX,\tau)v_{jl}(\bfX',\tau')\right> &=& \left< v_i \left(\bfX_c \left(\frac{\bfx-\bfx'}{2},t \right),t \right) v_{jl} \left(\bfX_c \left(\frac{\bfx'-\bfx}{2},t' \right),t' \right) \right>\nonumber\\[3ex]
&=& Q_{ijl}(\bfx-\bfx',t,t')
\label{sdef}
\end{eqnarray}
\noindent
We note that symmetry and incompressibility imply that
\begin{eqnarray}
R_{ij}(\bfr,t,t') \;&=&\; R_{ji}(-\bfr,t',t)\nonumber\\[2ex]
Q_{ijj}(\bfr,t,t') \;&=&\; 0
\end{eqnarray}

\subsection{Galilean--invariant mean EMF}

The transport coefficients are completely determined by the form of the velocity correlator.
Using equations~(\ref{rdef}) and (\ref{sdef}) in equations~(\ref{trcoeffsvv}) and noting the fact that the velocity correlators defined above are functions only of $(\bfx - \bfx')$, $t$ and $t'$, we can see that the GI transport coefficients, 
\begin{eqnarray}
\widehat{\alpha}_{il}(\bfx,t,\bfx', t') &\;=\;& \epsilon_{ijm}\left[Q_{jml}(\bfx-\bfx',t,t') \;+\; St'\,\delta_{l1}\,Q_{jm2}(\bfx-\bfx',t,t')\right]\nonumber\\[1ex]
\widehat{\beta}_{il}(\bfx,t,\bfx',t') &\;=\;& \epsilon_{ij2}\left[Q_{j1l}(\bfx-\bfx',t,t') \;+\; St'\,\delta_{l1}\,Q_{j12}(\bfx-\bfx',t,t')\right]\nonumber\\[1ex]
\widehat{\eta}_{iml}(\bfx,t,\bfx',t') &\;=\;& \epsilon_{ijl}\,R_{jm}(\bfx-\bfx',t, t')
\label{trcoeffginv}
\end{eqnarray}
\noindent
The transport coefficients depend on $\bfx$ and $\bfx'$ only through the combination, $(\bfx - \bfx')$, 
which arises because of Galilean invariance. We can derive an expression for the G--invariant mean EMF by using equations~(\ref{trcoeffginv}) for the transport coefficients in equation~(\ref{emf}). We also change the integration variable in equation~(\ref{emf}) to $\bfr \,=\, \bfx-\bfx'$. The integrands can be simplified as follows:
\begin{eqnarray}
\widehat{\alpha}_{il}(\bfx,t,\bfx', t')\left[H'_l - St'\delta_{l2}H'_1\right]
&\;=\;& \epsilon_{ijm}\left[Q_{jml} \;+\; St'\,\delta_{l1}\,Q_{jm2}\right]
 \left[H'_l - St'\delta_{l2}H'_1 \right]\nonumber\\[1ex]
&\;=\;& \epsilon_{ijm}Q_{jml}(\bfr,t,t')H_l(\bfx-\bfr,t')\nonumber\\[2ex]
\widehat{\beta}_{il}(\bfx,t,\bfx', t')\left[H'_l - St'\delta_{l2}H'_1\right]
&\;=\;& \epsilon_{ij2}\left[Q_{j1l} \;+\; St'\,\delta_{l1}\,Q_{j12}\right] 
\left[H'_l - St'\delta_{l2}H'_1\right]\nonumber\\[1ex]
&\;=\;& \epsilon_{ij2}Q_{j1l}(\bfr,t,t')H_l(\bfx-\bfr,t')\nonumber
\end{eqnarray}

\begin{eqnarray}
\left[\,\widehat{\eta}_{iml} - St'\delta_{m2}\,\widehat{\eta}_{i1l}\right]H'_{lm}
&\;=\;& \epsilon_{ijl}\left[R_{jm}(\bfr,t, t') \;-\; St'\delta_{m2}\,R_{j1}(\bfr,t,t') \right]H_{lm}(\bfx-\bfr,t') \nonumber\\[2ex]
\left[\,\widehat{\eta}_{im2} - St'\delta_{m2}\,\widehat{\eta}_{i12}\right]H'_{1m}
&\;=\;& \epsilon_{ij2}\,\delta_{l1}\left[R_{jm}(\bfr,t, t') \;-\; St'\delta_{m2}\,R_{j1}(\bfr,t, t') \right]H_{lm}(\bfx-\bfr,t') \nonumber
\end{eqnarray}

\noindent
Define
\begin{eqnarray}
C_{jml}(\bfr,t,t') &\;=\;& Q_{jml}(\bfr,t,t') \;+\; S(t-t')\delta_{m2}\,Q_{j1l}(\bfr,t,t')\nonumber\\[2ex]
D_{jm}(\bfr,t,t') &\;=\;& R_{jm}(\bfr,t,t') \;-\; St'\delta_{m2}\,R_{j1}(\bfr,t,t')
\label{cddef}
\end{eqnarray}
\noindent
The mean EMF can now be written compactly as
\begin{eqnarray}
{\cal E}_i(\bfx, t)
\;=\; && \epsilon_{ijm}\,\int_0^t \mathrm{d}t' \int \mathrm{d}^3r \; G_{\eta}(\bfr,t,t')\,C_{jml}(\bfr,t,t')H_l(\bfx-\bfr,t') \nonumber \\ [2ex]
\;&-&\; \int_0^t \mathrm{d}t' \int \mathrm{d}^3r \, G_{\eta}(\bfr,t,t')\left[\epsilon_{ijl} + S(t-t')\delta_{l1}\epsilon_{ij2}\right] \times \nonumber \\[2ex] 
&& \qquad \qquad \qquad \qquad\times D_{jm}(\bfr,t, t')H_{lm}(\bfx-\bfr,t')
\label{emfcd}
\end{eqnarray}

\section{Mean--field induction equation}

\subsection{Mean--field induction equation in sheared coordinate space}

Applying the shearing transformation given in equations~(\ref{sheartr}) and (\ref{partialsh}) to
the mean--field equation~(\ref{meanindeqn}), we see that the mean--field, $\bfH(\bfx, t)$, obeys
\beq
\frac{\partial H_i}{\partial t} \;-\; S\delta_{i2}H_1 \;=\;
\left(\bnabla\cross\bfemf\right)_i \;+\; \eta\bnabla^2 H_i
\label{hmeanindeqn}
\eeq
\noindent
where 
\beq
\left(\bnabla\right)_p \;\equiv\; \frac{\partial}{\partial X_p} \;=\; 
\frac{\partial}{\partial x_p} \;-\; St\,\delta_{p1}\frac{\partial}{\partial x_2}
\label{deltran}
\eeq
\noindent
We note that the divergence condition on the mean magnetic field can be written as
\beq
\bnabla\cendot\bfH \;\equiv\; \frac{\partial H_p}{\partial X_p} \;=\; H_{pp} \;-\; StH_{12} \;=\; 0
\label{divcond}
\eeq
\noindent
It may be verified that equation~(\ref{hmeanindeqn}) preserves the condition $\bnabla\cendot\bfH = 0\,$.
We now use equations~(\ref{emfcd}) and (\ref{deltran}) to evaluate $\bnabla\cross\bfemf$.
\begin{eqnarray}
\left(\bnabla\cross\bfemf\right)_i &=& \epsilon_{ipq}\frac{\partial{\cal E}_q}{\partial X_p}
\;=\; \epsilon_{ipq}\left(\frac{\partial}{\partial x_p} \;-\; St\,\delta_{p1}\frac{\partial}{\partial x_2}
\right){\cal E}_q
\nonumber\\[3ex] 
&=& \epsilon_{ipq}\epsilon_{qjm}\,\int_0^t \mathrm{d}t' \int \mathrm{d}^3r \; G_{\eta}(\bfr,t,t')C_{jml}(\bfr,t,t')
\left[H'_{lp} - St\,\delta_{p1}H'_{l2}\right]\nonumber\\[2ex]
&&- \int_0^t \mathrm{d}t' \int \mathrm{d}^3r \, G_{\eta}(\bfr,t,t') D_{jm}(\bfr,t, t')\left[\epsilon_{ipq}\epsilon_{qjl} + S(t-t')\delta_{l1}\epsilon_{ipq}\epsilon_{qj2}
\right] \times \nonumber \\ [1ex]
&& \qquad \qquad \qquad \times \left[H'_{lmp} - St\,\delta_{p1}H'_{lm2}\right]
\nonumber
\end{eqnarray}
\noindent
where $H'_i = H_i(\bfx - \bfr, t')$.  Expanding $\epsilon_{ipq}\epsilon_{qjm} = \left(\delta_{ij}\,\delta_{mp} - \delta_{im}\,\delta_{jp}\right)$, 
the contribution from the $C$ term is
\begin{eqnarray}
\left(\bnabla\cross\bfemf\right)^C_i \;=\; && \, \int_0^t \mathrm{d}t' \int \mathrm{d}^3r\; G_{\eta}(\bfr,t,t') \left[C_{ipl} (\bfr,t,t') - C_{pil} (\bfr,t,t')\right] \times \nonumber \\ [2ex]
&& \qquad \qquad \times \left[H'_{lp} - St\delta_{p1}H'_{l2}\right]
\label{curlemfc}
\end{eqnarray}
\noindent
Evaluating the $D$ term is a bit more involved. Again, we begin by expanding $\epsilon_{ipq}\epsilon_{qjl} = \left(\delta_{ij}\,\delta_{lp} - \delta_{il}\,\delta_{jp}\right)$. Then we get 
\begin{eqnarray}
\left(\bnabla\cross\bfemf\right)^D_i &=&
\int_0^t \mathrm{d}t'  \int \mathrm{d}^3r\; G_{\eta}(\bfr,t,t') \,D_{pm} (\bfr, t, t') \times \nonumber \\ [1ex]
&& \qquad \times \left\{H'_{ipm} - St\delta_{p1}H'_{i2m} + 
S(t-t')\delta_{i2}\left[H'_{1pm} - St\delta_{p1}H'_{12m}\right]\right\}\nonumber\\[2ex]
&&- \int_0^t \mathrm{d}t'  \int \mathrm{d}^3r\; G_{\eta}(\bfr,t,t') \,D_{im} (\bfr,t,t')\left[H'_{ppm} - St'H'_{12m}\right]
\label{curlemfd}
\end{eqnarray}
\noindent
The second integral vanishes because the factor in $[\;]$ multiplying $D_{im}$ is zero: to see this,
differentiate the divergence--free condition of equation~(\ref{divcond}) with respect to $x_m$. 
We can now use equations~(\ref{curlemfc}) and (\ref{curlemfd}) to write $\left(\bnabla\cross\bfemf\right) = \left(\bnabla\cross\bfemf\right)^C + \left(\bnabla\cross\bfemf\right)^D$. Substituting this expression in 
equation~(\ref{hmeanindeqn}), we obtain a set of integro--differential equation governing the dynamics of the mean--field, $\bfH(\bfx, t)$, valid for arbitrary values of the shear strength $S$:
\begin{eqnarray}
\frac{\partial H_i}{\partial t} \;-\; S\delta_{i2}H_1 \;=\; 
\eta\bnabla^2 H_i &+& 
\int_0^t \mathrm{d}t' \int \mathrm{d}^3r\; G_{\eta}(\bfr,t,t') \; \left[C_{iml} (\bfr,t,t')- C_{mil} (\bfr,t,t')\right]\times\nonumber\\[2ex]
&&\times \left[H_{lm} (\bfx - \bfr, t') \;-\; St\delta_{m1}H_{l2} (\bfx - \bfr, t')\right]  \nonumber\\[4ex]
&+& \int_0^t \mathrm{d}t'  \int \mathrm{d}^3r\; G_{\eta}(\bfr,t,t') \,D_{jm} (\bfr, t, t')\times\nonumber\\[2ex] 
&&\times\left[H_{ijm}(\bfx - \bfr, t') \;-\; St\delta_{j1}H_{i2m}(\bfx - \bfr, t') \;+\;\right.\nonumber\\[2ex]
&&\left. +S(t-t')\delta_{i2}\left\{H_{1jm}(\bfx - \bfr, t') - St\delta_{j1}H_{12m}(\bfx - \bfr, t')\right\}\right]\nonumber\\ 
&& \label{hmeaneqnfin}
\end{eqnarray}
\noindent
We note some important properties of the mean--field induction equation~(\ref{hmeaneqnfin}):
\begin{enumerate}

\item The $D_{jm}(\bfr,t,t')$ terms are such that $\left(\bnabla\cross\bfemf\right)_i$
involves only $H_i$ for $i=1$ and $i=3$, whereas $\left(\bnabla\cross\bfemf\right)_2$
depends on both  $H_2$ and $H_1$.  The implications for the original field, $\bfB(\bfX, \tau)$, 
can be read off, because it is equal to $\bfH(\bfx, t)$ component--wise (i.e $B_i(\bfX, \tau) = H_i(\bfx, t)$). 
Therefore, in the mean--field induction equation, the ``D'' terms are of such a form that: (i) the equations for 
$B_1$ or $B_3$ involve only $B_1$ or $B_3$, respectively; (ii) the equation for $B_2$ involves both 
$B_1$ and $B_2$. 

\item Only the part of $C_{iml}(\bfr,t,t')$ that is antisymmetric in the indices $(i,m)$ 
contributes. We note that it is possible that the  ``C'' terms can lead to a coupling of different components 
of the mean magnetic field. To investigate this, it is necessary to specify the statistics of the velocity fluctuations.

\item In the formal limit of zero resistivity, $\eta\to 0$, the resistive Green's function, 
$G(\bfx, t, t')\to\delta(\bfx)$. Then the mean--field induction equation simplifies to
\begin{eqnarray}
\frac{\partial H_i}{\partial t} \;-\; S\delta_{i2}H_1 \;=\; 
\quad &&\int_0^t \mathrm{d}t'\,\left[C_{iml} ({\bf 0},t,t')- C_{mil} ({\bf 0},t,t')\right]\times\nonumber\\[2ex]
&&\qquad\times \left[H_{lm}(\bfx, t') \;-\; St\delta_{m1}H_{l2}(\bfx, t')\right]\nonumber\\[4ex]
&+& \int_0^t \mathrm{d}t'\,D_{jm} ({\bf 0}, t, t')\left[H_{ijm}(\bfx, t') \;-\; St\delta_{j1}H_{i2m}(\bfx, t') \;+\;\right.\nonumber\\[2ex]
&&\left. +S(t-t')\delta_{i2}\left\{H_{1jm}(\bfx, t') - St\delta_{j1}H_{12m}(\bfx,t')\right\}\right]\nonumber\\ 
&& \label{hmeaneqnetazero}
\end{eqnarray}
\noindent
which is identical to that derived in \cite{SS09a,SS09b}. 
\end{enumerate}

\subsection{Mean--field induction equation in sheared Fourier space}

Equation~(\ref{hmeaneqnfin}) governing the time evolution of the mean field may be simplified further
by taking a spatial Fourier transformation. Let us define
\beq
\tilde{\bfH}(\bfk, t) \;=\; \int\,\mathrm{d}^3x\,\bfH(\bfx, t)\exp{(-\mathrm{i}\,\bfk\cendot\bfx)}
\label{fth}
\eeq 
\noindent and the quantities,
\begin{eqnarray}
\tilde{I}_{iml}(\bfk,t,t') &=& \int\,\mathrm{d}^3r\,G_{\eta}(\bfr,t,t')C_{iml}(\bfr,t,t')
\exp{(-\mathrm{i}\,\bfk\cendot\bfr)}\nonumber\\[3ex]
\tilde{J}_{jm}(\bfk,t,t') &=& \int\,\mathrm{d}^3r\,G_{\eta}(\bfr,t,t')D_{jm}(\bfr,t,t')
\exp{(-\mathrm{i}\,\bfk\cendot\bfr)}
\label{ijdef}
\end{eqnarray}
\noindent
Both $\tilde{I}_{iml}(\bfk,t,t')$ and $\tilde{J}_{jm}(\bfk,t,t')$ are to be regarded as given quantities, 
because they are known once the velocity correlators have been specified. Taking the spatial Fourier 
transform of equation~(\ref{hmeaneqnfin}), we obtain, 
\begin{eqnarray}
\frac{\partial \tilde{H}_i}{\partial t} \;-\; S\delta_{i2}\tilde{H}_1 
&=& -\eta K^2\tilde{H}_i \;+\; \mathrm{i}\, K_m \int_0^t \mathrm{d}t'\,\left[\tilde{I}_{iml}(\bfk,t,t')
- \tilde{I}_{mil}(\bfk,t,t')\right]\tilde{H}_l(\bfk, t')\nonumber\\[3ex]
&& -k_mK_j\int_0^t \mathrm{d}t'\,\tilde{J}_{jm}(\bfk,t,t')\left[\tilde{H}_i(\bfk, t')
+S(t-t')\delta_{i2}\tilde{H}_1(\bfk, t')\right]\nonumber\\
&&\label{ftindeqn}
\end{eqnarray}
\noindent
where $\bfK(\bfk, t) = (k_1 - St\,k_2, \,k_2, \,k_3)$ and $K^2 = \vert\bfK\vert^2 =
(k_1 - Stk_2)^2 + k_2^2 + k_3^2\,$. Once the initial data, $\tilde{\bfH}(\bfk, 0)$, has been specified, equations~(\ref{ftindeqn}) can be integrated in time to determine $\tilde{\bfH}(\bfk, t)$. Whereas these equations are not easy to solve, we note some of their important properties: 

\begin{enumerate}

\item Only the part of $\tilde{I}_{iml}(\bfk,t,t')$ that is antisymmetric in the indices $(i,m)$ 
contributes. 

\item The time evolution of $\tilde{\bfH}(\bfk, t)$ depends only on $\tilde{\bfH}(\bfk, t')$ for 
$0\leq t'< t$, not on the values of $\tilde{\bfH}$ at other values of $\bfk$. Thus each $\bfk$ labels 
a {\it normal mode} whose {\it amplitude} and {\it polarisation} are given by $\tilde{\bfH}(\bfk, t)$, the  time evolution of which is independent of all the other normal modes. 

\item When we have determined  $\tilde{\bfH}(\bfk, t)$, the magnetic field in the original variables, $\bfB(\bfX, \tau)$, can be recovered by 
using the shearing transformation, equation~(\ref{sheartr}), to write $(\bfx, t)$ in terms of the lab frame coordinates $(\bfX, \tau)$:
\begin{eqnarray}
\bfB(\bfX, \tau) &=& \bfH(\bfx, t) \;=\;
\int\,\frac{\mathrm{d}^3k}{(2\pi)^3}\,\tilde{\bfH}(\bfk, t)\exp{(\mathrm{i}\,\bfk\cendot\bfx)}\nonumber\\[3ex]
&=& \int\,\frac{\mathrm{d}^3k}{(2\pi)^3}\,\tilde{\bfH}(\bfk, \tau)\exp{\left[\mathrm{i}\,\bfK(\bfk, \tau)\cendot\bfX\right]}
\label{bmeanexp}
\end{eqnarray} 
\noindent
where we have used $\bfK\cendot\bfX =\bfk\cendot\bfx$. Thus $\bfB(\bfX, \tau)$ has been expressed as a superposition of the normal modes, each of which is a
{\it shearing wave}, whose spatial structure is given by 
\beq
\exp{\left[\mathrm{i}\,\bfK(\bfk, \tau)\cendot\bfX\right]} \;=\; \exp{\left[\mathrm{i}\,\{(k_1 - Stk_2)X_1 + k_2X_2 + k_3X_3\}\right]}
\label{shwave}
\eeq
\noindent
For non--axisymmetric waves, $k_2\neq 0$ and, as time progresses, the shearing wave develops fine-structure along the $X_1$--direction with a time--dependent spatial frequency equal to $(k_1 - Stk_2)$. 
\end{enumerate}

\subsection{The integral kernels expressed in terms of the velocity spectrum tensor}

We have derived the integral equation satisfied by the mean magnetic field, to lowest order in $\rem$; in sheared coordinate space it is 
given by equations~(\ref{hmeaneqnfin}), and in sheared Fourier space it is given by equations~(\ref{ftindeqn}). One can proceed to look 
for solutions if the integral kernels are known. This means that either the pair $\left[C_{iml} (\bfr,t,t')\,, D_{jm}(\bfr,t,t')\right]$ or the pair 
$\left[\tilde{I}_{iml}(\bfk,t,t')\,, \tilde{J}_{jm}(\bfk,t,t')\right]$ needs to be specified. Here we show that all these integral kernels can be expressed 
in terms of a single entity, which is the velocity spectrum tensor, $\Pi_{ij}(\bfk, t, t')$.
The Galilean invariance of velocity correlators stated in equation~(\ref{ginvacorr}) is most compactly expressed in Fourier--space; this is stated in 
the theorem below. Let $\tilde{\bfv}(\bfK, \tau)$ be the spatial Fourier transform of $\bfv(\bfX,\tau)$, defined by
\beq
\tilde{\bfv}(\bfK, \tau) \;=\; \int\, \mathrm{d}^3X\exp{(-\mathrm{i}\,\bfK\cdot\bfX)}\, \bfv(\bfX, \tau)\,;\qquad\qquad \left[\bfK\cdot\tilde{\bfv}(\bfK, \tau)\right] \;=\; 0
\eeq 
\noindent
It is proved in the Appendix that a G--invariant Fourier--space two--point velocity correlator must be of the form
\beq
\left<\tilde{v}_i(\bfK, \tau)\,\tilde{v}_j^*(\bfK', \tau')\right> \;=\; 
(2\pi)^6\,\delta(\bfk -\bfk')\,\Pi_{ij}(\bfk, t, t')
\label{spec}
\eeq

We first work out $R_{jm}(\bfr, t, t')$ and $Q_{jml}(\bfr, t, t')$ in terms of $\Pi_{jm}(\bfk, t, t')$. From equation~(\ref{rdef}) and  (\ref{sdef}) we have
\begin{eqnarray}
R_{jm}(\bfr,t,t') &=& \left< v_j\left(\bfX_c \left(\frac{\bfr}{2},t \right),t \right) v_m \left(\bfX_c \left(-\frac{\bfr}{2},t' \right),t' \right) \right>\nonumber\\[3ex]
&=& \int\,\frac{\mathrm{d}^3K}{(2\pi)^3}\,\frac{\mathrm{d}^3K'}{(2\pi)^3}\,\left<v_j\left(\bfK,t\right)v_m^*\left(\bfK',t'\right)\right>\,\times\nonumber\\[1ex]
&&\qquad\times\,\exp{\left[\mathrm{i}\,\left(\bfK\cendot\bfX_c\left(\frac{\bfr}{2},t\right) - \bfK'\cendot\bfX_c\left(-\frac{\bfr}{2},t'\right)\right)\right]}
\label{rfour}
\end{eqnarray}
\begin{eqnarray}
Q_{jml}(\bfr,t,t') &=& \left< v_j\left(\bfX_c \left(\frac{\bfr}{2},t \right),t \right) v_{ml} \left(\bfX_c \left(-\frac{\bfr}{2},t' \right),t' \right) \right>\nonumber\\[3ex]
&=& \int\,\frac{\mathrm{d}^3K}{(2\pi)^3}\,\frac{\mathrm{d}^3K'}{(2\pi)^3}\,\left(-\mathrm{i}\,K'_l\right)\,
\left<v_j\left(\bfK,t\right)v_m^*\left(\bfK',t'\right)\right>\,\times\nonumber\\[1ex]
&&\qquad\times\,\exp{\left[\mathrm{i}\,\left(\bfK\cendot\bfX_c\left(\frac{\bfr}{2},t\right) - \bfK'\cendot\bfX_c\left(-\frac{\bfr}{2},t'\right)\right)\right]}\nonumber\\
&&\label{sfour}
\end{eqnarray}
\noindent
Substituting for $\bfX_c$ from equation~(\ref{orgvector}), we can write the phase
\beq 
\bfK\cendot\bfX_c\left(\frac{\bfr}{2},t\right) - \bfK'\cendot\bfX_c\left(-\frac{\bfr}{2},t'\right) \;=\; \left(\bfk + \bfk'\right)\cendot\frac{\bfr}{2}
\label{phasesimple}
\eeq
\noindent
where $\bfk$ and $\bfk'$ are defined in  equations~(\ref{fourshtr}). Therefore,
\begin{eqnarray}
R_{jm}(\bfr,t,t') &=& \int\,\mathrm{d}^3k\,\Pi_{jm}(\bfk,t,t')\,\exp{\left[\mathrm{i}\,\bfk\cendot\bfr\right]}\nonumber\\[5ex]
Q_{jml}(\bfr,t,t') &=& -\mathrm{i}\,\int\,\mathrm{d}^3k\,\left[k_l - St'\delta_{l1}k_2\right]\,\Pi_{jm}(\bfk,t,t')\,\exp{\left[\mathrm{i}\,\bfk\cendot\bfr\right]}
\label{rsfour}
\end{eqnarray}
\noindent
Using equations~(\ref{cddef}) we can write the real--space integral kernels, $C_{jml}(\bfr,t,t')$ and $D_{jm}(\bfr,t,t')$, as
\begin{eqnarray}
D_{jm}(\bfr,t,t') &=& R_{jm}(\bfr,t,t') \;-\; St'\delta_{m2}\,R_{j1}(\bfr,t,t')\nonumber\\[3ex]
&=& \int\,\mathrm{d}^3k\,\left[\Pi_{jm}(\bfk,t,t')  - St'\delta_{m2}\,\Pi_{j1}(\bfk,t,t')\right]\exp{\left[\mathrm{i}\,\bfk\cendot\bfr\right]}\nonumber\\[5ex]
C_{jml}(\bfr,t,t') &=& Q_{jml}(\bfr,t,t') \;+\; S(t-t')\delta_{m2}\,Q_{j1l}(\bfr,t,t')\nonumber\\[3ex]
&=& -\mathrm{i}\,\int \mathrm{d}^3k\left[k_l - St'\delta_{l1}k_2\right]\left[\Pi_{jm}(\bfk,t,t') +  S(t-t')\delta_{m2}\Pi_{j1}(\bfk,t,t')\right]
\exp{\left[\mathrm{i}\,\bfk\cendot\bfr\right]}\nonumber\\
&&\label{dcfour}
\end{eqnarray}
\noindent
Using equations~(\ref{ijdef}) we can express the Fourier--space integral kernels, $\tilde{I}_{jml}(\bfk,t,t')$ and $\tilde{J}_{jm}(\bfk,t,t')$, 
as
\begin{eqnarray}
\tilde{J}_{jm}(\bfk,t,t') &=& \int\,\mathrm{d}^3k'\, \tilde{G}_{\eta}(\bfk - \bfk', t, t')\left[\Pi_{jm}(\bfk',t,t')  - St'\delta_{m2}\,\Pi_{j1}(\bfk',t,t')\right]
\nonumber\\[5ex]
\tilde{I}_{jml}(\bfk,t,t') &=& -\mathrm{i}\,\int\, \mathrm{d}^3k'\,\tilde{G}_{\eta}(\bfk - \bfk', t, t')\left[k'_l - St'\delta_{l1}k'_2\right]\,\times\nonumber\\[2ex]
&&\qquad\times\,\left[\Pi_{jm}(\bfk',t,t') +  S(t-t')\delta_{m2}\Pi_{j1}(\bfk',t,t')\right]
\label{jifour}
\end{eqnarray}
Thus, we have expressed the integral kernels in terms of  the velocity spectrum tensor, $\Pi_{jm}(\bfk, t, t')$, which is the fundamental
dynamical quantity that needs to be calculated before the integro--differential equation for the mean magnetic field can be solved.

\section{Conclusions}

We have formulated the problem of large--scale kinematic dynamo action due to turbulence 
in the presence of a linear shear flow, in the limit of small magnetic Reynolds number ($\rem$) 
but arbitrary fluid Reynolds number. The mean--field theory we present is non perturbative in the
shear parameter, and makes systematic use of the shearing coordinate transformation
and the Galilean invariance of the linear shear flow. Using Reynolds averaging, we split the magnetic field into 
mean and fluctuating components. The mean magnetic field is driven by the Curl of the mean EMF, which in 
turn must be determined in terms of the statistics of the velocity fluctuations. In order to do this  
it is necessary to determine the magnetic fluctuations in terms of the mean magnetic field and the velocity fluctuations. 
So we develop the equation for the fluctuations perturbatively in the small parameter, $\rem$. 
Using the shearing coordinate transformation, we make an explicit calculation of the resistive Green's 
function for the linear shear flow. From the perturbative scheme it is clear that the fluctuations can be determined to 
any order in $\rem$. Here we determine the magnetic fluctuations and the mean EMF to lowest order in $\rem$. 
The transport coefficients are given in general form in terms of the two--point correlators of the velocity fluctuations. 
At this point we make use of Galilean invariance, which is a fundamental symmetry of the problem. For Galilean invariant 
velocity statistics we prove that the transport coefficients, although space-dependent, possess the 
property of translational invariance in sheared coordinate space. An explicit expression for the Galilean--invariant mean 
EMF is derived. 

We put together all the results in \textsection~5 by deriving the integro--differential equation governing the time evolution of the mean magnetic field. Some important properties of this equation are the following: 

\begin{enumerate}
\item Velocity fluctuations contribute to two different kinds of terms, the ``C'' and ``D'' terms, in which first and second 
spatial derivatives of the mean magnetic field, respectively, appear inside the spacetime integrals. 

\item The ``C'' terms are a generalization to the case of shear, of the ``$\alpha$'' term familiar from mean--field 
electrodynamics in the absence of shear. However, they can also contribute to ``magnetic diffusion''; see discussion below.
Likewise, the ``D'' terms are a generalization to the case of shear, of the ``magnetic diffusion''  term familiar from mean--field 
electrodynamics in the absence of shear. It must be noted that the generalization is non perturbative in the shear strength. 

\item In the mean--field induction equation, the ``D'' terms are of such a form that: (i) the equations for 
$B_1$ or $B_3$ involve only $B_1$ or $B_3$, respectively; (ii) the equation for $B_2$ involves both 
$B_1$ and $B_2$. Therefore, to lowest order in $\rem$ but to all orders in the shear strength, the ``D'' terms cannot
give rise to  a shear--current assisted dynamo effect.

\item In the formal limit of zero resistivity, the quasilinear theory of \cite{SS09a, SS09b} is recovered. In this case, 
the ``C'' terms vanish when the velocity field is non helical. However, this may not be the case when the resitivity is 
non zero. Whether the ``C'' terms give rise to such a shear--current--type effect depends on the form of the velocity correlators, 
which will be strongly affected by shear and highly anisotropic; hence it is difficult to guess their tensorial forms {\it a priori}
and it is necessary to develop a dynamical theory of velocity correlators -- see below for further discussion.

\item Sheared Fourier space is the natural setting for the mean magnetic field;  the normal modes
of the theory are a set of shearing waves, labelled by their sheared wavevectors. 

\item We prove a theorem (in the Appendix) on the form of the two--point velocity correlator in Fourier space; the velocity spectrum tensor and its 
general properties are discussed. The integral kernels are expressed in terms of the velocity spectrum tensor, which is the fundamental 
dynamical quantity that needs to be specified to complete the integro--differential equation description of the time 
evolution of the mean magnetic field.
\end{enumerate}

The physical meaning of the  ``C'' and ``D'' terms becomes clear in the limit of a slowly varying magnetic field, when the integro--differential equation 
reduces to a partial differential equation \citep{SS10}. Then we  encounter the well--known $\alpha$--effect and turbulent magnetic diffusion ($\eta$), 
albeit in tensorial form. The ``C'' terms alone contributes to $\alpha$, whereas both ``C'' and ``D'' terms contribute to magnetic diffusion. When the 
velocity field is non helical, the velocity spectrum tensor is real, and the tensorial $\alpha$ coefficient vanishes; this result is true for arbitrary values of 
the shear parameter. The ``C'' terms can, in principle, contribute to a shear--current effect, through the off--diagonal components of the diffusivity tensor (which couple the streamwise 
component of the mean magnetic field with the cross--stream components). It turns out that these off--diagonal components depend on the 
microscopic resistivity in such a manner that they vanish when the microscopic resistivity vanishes. This result is consistent with the results of
\cite{SS09a, SS09b}. To deal with the case when the microscopic resistivity does not vanish, it is necessary to provide our kinematic development with 
a dynamical model for the velocity field. \cite{SS10} show that, for forced non helical driving at low fluid Reynolds number, the sign of the off--diagonal 
terms of the diffusivity tensor does not favour the shear--current effect. This conclusion agrees with those reported in \cite{RS06,RK06,BRRK08}, even if  
our results are limited to low Reynolds numbers. If we seek a different explanation for the dynamo action seen in numerical simulations, 
the  ``fluctuating $\alpha$--effect'' still remains a promising candidate. $\alpha$ itself is described by second--order velocity correlators, so to describe 
fluctuatons of $\alpha$, it is necessary to deal with either fourth--order velocity correlators or products of two second--order velocity correlators. 
This requires extending our perturbative calculations by at least two higher orders, a task which, while tractable,  is beyond the scope of 
the present investigation. 

\acknowledgments
We thank Axel Brandenburg, Karl--Heinz R\"adler and Matthias Rheinhardt for valuable comments.

\ifCUPmtlplainloaded \newpage\fi

\appendix
\section{A theorem on Galilean invariant velocity correlators}

\newtheorem{theorem}{Theorem}
\begin{theorem}
A G--invariant Fourier--space two--point velocity correlator must be of the form
\beq
\left<\tilde{v}_i(\bfK, \tau)\,\tilde{v}_j^*(\bfK', \tau')\right> \;=\; 
(2\pi)^6\,\delta(\bfk -\bfk')\,\Pi_{ij}(\bfk, t, t')
\label{speca}
\eeq
\noindent
where $\Pi_{ij}$ is the {\it velocity spectrum tensor}, which must possess the following properties: 
\begin{eqnarray}
\Pi_{ij}(\bfk, t, t') &\;=\;& \Pi_{ij}^*(-\bfk, t, t') \;=\; \Pi_{ji}(-\bfk, t', t)
\nonumber\\[2ex]
K_i\,\Pi_{ij}(\bfk, t, t') &\;=\;& \left(k_i - St\,\delta_{i1}k_2\right)\Pi_{ij}(\bfk, t, t') \;=\; 0
\nonumber\\[2ex]
K'_j\,\Pi_{ij}(\bfk, t, t') &\;=\;& \left(k_j - St'\,\delta_{j1}k_2\right)\Pi_{ij}(\bfk, t, t') \;=\; 0
\label{specprop}
\end{eqnarray} 
\end{theorem}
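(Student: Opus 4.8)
The plan is to work entirely in Fourier space, where the Galilean shift by $\bfX_c(\bfxi,\tau)$ acts on $\tilde{\bfv}$ merely by multiplication by a phase, and then read off both the delta function and the three listed properties.

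First I would write the Fourier--space correlator as a lab--space double integral,
\[
\left<\tilde{v}_i(\bfK,\tau)\,\tilde{v}_j^*(\bfK',\tau')\right> \;=\; \int\mathrm{d}^3X\,\mathrm{d}^3X'\;\mathrm{e}^{-\mathrm{i}\bfK\cendot\bfX+\mathrm{i}\bfK'\cendot\bfX'}\,\left<v_i(\bfX,\tau)\,v_j(\bfX',\tau')\right>,
\]
using reality of $\bfv$ to turn the conjugated factor into $\tilde v_j^*(\bfK',\tau')=\int\mathrm{d}^3X'\,\mathrm{e}^{+\mathrm{i}\bfK'\cendot\bfX'}\,v_j(\bfX',\tau')$. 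Next I would invoke the Galilean--invariance identity~(\ref{ginvacorr}) to replace the integrand by the correlator with both points shifted, $\bfX\mapsto\bfX+\bfX_c(\bfxi,\tau)$ and $\bfX'\mapsto\bfX'+\bfX_c(\bfxi,\tau')$, and then change the (volume--preserving) integration variables to undo the shifts, so that the only trace of $\bfxi$ is the overall factor $\exp[\mathrm{i}\bfK\cendot\bfX_c(\bfxi,\tau)-\mathrm{i}\bfK'\cendot\bfX_c(\bfxi,\tau')]$. Using the explicit form~(\ref{orgvector}) of $\bfX_c$ one finds $\bfK\cendot\bfX_c(\bfxi,\tau)=\bfk\cendot\bfxi$, where $\bfk=(K_1+S\tau K_2,\,K_2,\,K_3)$ is exactly the sheared wavevector associated with $(\bfK,\tau)$ through $\bfK=\bfK(\bfk,\tau)$, and likewise $\bfK'\cendot\bfX_c(\bfxi,\tau')=\bfk'\cendot\bfxi$.

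The key step is then immediate: the identity just derived reads
\[
\left<\tilde{v}_i(\bfK,\tau)\,\tilde{v}_j^*(\bfK',\tau')\right> \;=\; \mathrm{e}^{\mathrm{i}(\bfk-\bfk')\cendot\bfxi}\,\left<\tilde{v}_i(\bfK,\tau)\,\tilde{v}_j^*(\bfK',\tau')\right>
\]
for \emph{every} $\bfxi$, which forces the correlator to be supported on $\bfk=\bfk'$; writing it as $(2\pi)^6\,\delta(\bfk-\bfk')\,\Pi_{ij}(\bfk,t,t')$ with $t=\tau$, $t'=\tau'$ defines $\Pi_{ij}$, the numerical factor $(2\pi)^6$ being fixed by demanding consistency with the real--space convention~(\ref{rsfour}). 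It then remains to establish~(\ref{specprop}). The conjugation relation $\Pi_{ij}(\bfk,t,t')=\Pi_{ij}^*(-\bfk,t,t')$ follows from $\tilde v_i^*(\bfK,\tau)=\tilde v_i(-\bfK,\tau)$ together with the fact that $\bfK\mapsto-\bfK$ sends $\bfk\mapsto-\bfk$. The exchange relation $\Pi_{ij}(\bfk,t,t')=\Pi_{ji}(-\bfk,t',t)$ follows by writing the correlator as $\overline{\left<\tilde v_j(\bfK',\tau')\,\tilde v_i^*(\bfK,\tau)\right>}$ (commutativity of the ensemble average plus one complex conjugation) and applying the conjugation relation already obtained; this is consistent with the real--space symmetry $R_{ij}(\bfr,t,t')=R_{ji}(-\bfr,t',t)$ noted earlier. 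Finally the two transversality conditions come from incompressibility, $K_i\tilde v_i(\bfK,\tau)=0$ and $K'_j\tilde v_j^*(\bfK',\tau')=0$: since $K_i=k_i-St\delta_{i1}k_2$ and, on the support $\bfk=\bfk'$, $K'_j=k_j-St'\delta_{j1}k_2$, these translate directly into the stated conditions on $\Pi_{ij}$.

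The main obstacle I anticipate is pure bookkeeping: keeping straight the two wavevectors --- the lab--frame (physical, time--dependent) $\bfK$ conjugate to $\bfX$ versus the constant shearing--wave label $\bfk$ conjugate to $\bfx$ --- and verifying carefully that the substitution trading $\bfX$ for $\bfX-\bfX_c(\bfxi,\tau)$ is measure preserving so that the phase is all that survives, and that the normalisation constant is the one quoted. One also has to treat $\tilde{\bfv}$ and its correlator as generalized functions, as is standard in homogeneous turbulence theory; no genuine analytic difficulty arises beyond that.
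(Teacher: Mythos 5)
Your proposal is correct and follows essentially the same route as the paper's own proof: Fourier--transform the real--space correlator, apply the G--invariance identity, shift the integration variables to isolate the phase $\exp[\mathrm{i}(\bfk-\bfk')\cendot\bfxi]$, and conclude that arbitrariness of $\bfxi$ forces support on $\bfk=\bfk'$. Your derivation of the properties in~(\ref{specprop}) simply spells out in more detail what the paper asserts in one line (reality, index/time interchange symmetry, and incompressibility), and it is accurate.
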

\begin{proof}
The velocity correlator in Fourier--space is
\begin{eqnarray}
\left<\tilde{v}_i(\bfK, \tau)\,\tilde{v}_j^*(\bfK', \tau')\right> &\;=\;& \int\,\mathrm{d}^3X\,\mathrm{d}^3X'\,
\exp{\left[\mathrm{i}\,\left(\bfK'\cendot\bfX' - \bfK\cendot\bfX\right)\right]}
\left<v_i(\bfX, \tau)\,v_j(\bfX', \tau')\right>\nonumber\\[3ex]
&\;=\;& \int\,\mathrm{d}^3X\,\mathrm{d}^3X'\,
\exp{\left[\mathrm{i}\,\left(\bfK'\cendot\bfX' - \bfK\cendot\bfX\right)\right]}\,\times\nonumber\\[1ex]
&&\qquad\times\,\left<v_i(\bfX + \bfX_c(\bfxi,\tau), \tau)\,v_j(\bfX' + \bfX_c(\bfxi,\tau'), \tau')\right> 
\label{fourcorr}
\end{eqnarray}
\noindent
where equation~(\ref{ginvacorr}) has been used. Using new dummy variables of integration, 
$\bfX \to \bfX - \bfX_c(\bfxi,\tau)$ and $\bfX' \to \bfX' - \bfX_c(\bfxi,\tau')$, we write
\begin{eqnarray}
\left<\tilde{v}_i(\bfK, \tau)\,\tilde{v}_j^*(\bfK', \tau')\right>  &\;=\;&
\exp{\left[\mathrm{i}\,\left(\bfK\cendot\bfX_c(\bfxi,\tau) - \bfK'\cendot\bfX_c(\bfxi,\tau')\right)\right]}\,\times\nonumber\\[1ex]
&&\;\;\times\,\int\,\mathrm{d}^3X\,\mathrm{d}^3X'\,\exp{\left[\mathrm{i}\,\left(\bfK'\cendot\bfX' - \bfK\cendot\bfX\right)\right]}
\left<v_i(\bfX, \tau)\,v_j(\bfX', \tau')\right>\nonumber\\[3ex]
&\;=\;& \exp{\left[\mathrm{i}\,\left(\bfK\cendot\bfX_c(\bfxi,\tau) - \bfK'\cendot\bfX_c(\bfxi,\tau')\right)\right]}\,\times\,
\left<\tilde{v}_i(\bfK, \tau)\,\tilde{v}_j^*(\bfK', \tau')\right>\nonumber\\
\label{fouriden}
\end{eqnarray}
\noindent
Comparing the left and right sides, we conclude that the phase,$\left[\bfK\cendot\bfX_c(\bfxi,\tau) - \bfK'\cendot\bfX_c(\bfxi,\tau')\right]$, 
must vanish. Substituting for $\bfX_c$ from equation~(\ref{orgvector}), the condition of zero phase implies that
\beq
(k_1 -k'_1)\xi_1 \;+\; (k_2 - k'_2)\xi_2 \;+\; (k_3 - k'_3)\xi_3 \;=\; 0
\label{phasezero}
\eeq
\noindent
where $\bfk\equiv (k_1, k_2, k_3)$ and $\bfk'\equiv (k'_1, k'_2, k'_3)$ are sheared wavevectors which are related to
 $\bfK$ and $\bfK'$ through the Fourier--space shearing transformation
\begin{eqnarray}
k_1 &\;=\;& K_1 + S\tau K_2\,,\qquad k_2 \;=\; K_2\,,\qquad k_3 \;=\; K_3\,,\qquad t \;=\; \tau\nonumber\\[2ex]
k'_1 &\;=\;& K'_1 + S\tau' K'_2\,,\qquad k'_2 \;=\; K'_2\,,\qquad k'_3 \;=\; K'_3\,,\qquad t' \;=\; \tau'
\label{fourshtr}
\end{eqnarray}
\noindent
Since equation~(\ref{phasezero}) must be valid for arbitrary $(\xi_1, \xi_2, \xi_3)$, we must have $\bfk=\bfk'$. 
In other words, the G--invariant Fourier--space velocity correlator must be of the general form stated in equation~(\ref{speca}). 
Moreover the listed properties of the velocity spectrum tensor, $\Pi_{ij}$, given in equations~(\ref{specprop}) follow from  the reality of 
$\bfv(\bfX, \tau)$, symmetry with respect to simultaneous interchange of $(i,j)$ and $(t,t')$, and incompressibility.
\end{proof}

\label{lastpage}

\end{document}